\def\bI{\boldsymbol{I}}
\def\bP{\boldsymbol{P}}
\def\bR{\boldsymbol{R}}
\def\bmu{\boldsymbol{\mu}}
\def\bSigma{\boldsymbol{\Sigma}}
\def\btau{\boldsymbol{\tau}}
\newtheorem{theorem}{Theorem}
\newtheorem{lemma}{Lemma}
\theoremstyle{definition}
\algnewcommand\algorithmicinput{\textbf{Input:}}
\algnewcommand\Input{\item[\algorithmicinput]}
\algnewcommand\algorithmicoutput{\textbf{Output:}}
\algnewcommand\Output{\item[\algorithmicoutput]}
\algnewcommand\algorithmicinit{\textbf{Initialize:}}
\algnewcommand\Init{\item[\algorithmicinit]}
\newcommand*{\rom}[1]{\expandafter\@slowromancap\romannumeral #1@}
\begin{document}

\title{Covariance Recovery for One-Bit Sampled  Data
With Time-Varying Sampling  Thresholds---\\ Part~\rom{1}: Stationary Signals}

\author{Arian Eamaz, \IEEEmembership{Student Member, IEEE}, Farhang Yeganegi,  and \\ Mojtaba Soltanalian, \IEEEmembership{Senior Member, IEEE}
\thanks{This work was supported in part by the National Science Foundation Grants CCF-1704401 and ECCS-1809225. Parts of this work were presented at the International Conference on Acoustics, Speech and Signal Processing (ICASSP) 2021, held in Toronto, Canada \cite{eamaz2021modified}. The first two authors contributed equally to this work.}
\thanks{A. Eamaz, F. Yeganegi and M. Soltanalian are with the Department of Electrical and Computer Engineering, University of Illinois Chicago, Chicago, IL 60607, USA (e-mails: \emph{\{aeamaz2, fyegan2, msol\}@uic.edu}).}
}

\markboth{IEEE TRANSACTIONS ON SIGNAL PROCESSING, 2022}
{Shell \MakeLowercase{\textit{et al.}}: Bare Demo of IEEEtran.cls for IEEE Journals}
\maketitle

\begin{abstract}
One-bit quantization, which relies on comparing the signals of interest with given threshold levels, has attracted considerable attention in signal processing for communications and sensing. A useful tool for covariance recovery in such settings is the \emph{arcsine law}, that estimates the normalized covariance matrix of zero-mean stationary input signals. This relation, however, only considers a zero sampling threshold, which can cause a remarkable information loss. In this paper, the idea of the arcsine law is extended to the case where one-bit analog-to-digital converters (ADCs) apply time-varying thresholds. Specifically, three distinct approaches are proposed, investigated, and compared, to recover the autocorrelation sequence of the stationary signals of interest. Additionally, we will study a modification of the \emph{Bussgang law}, a famous relation facilitating the recovery of the cross-correlation between the one-bit sampled data and the zero-mean stationary input signal. Similar to the case of the arcsine law, the Bussgang law only considers a zero sampling threshold. This relation is also extended to accommodate the more general case of time-varying thresholds for the stationary input signals.
\end{abstract}

\begin{IEEEkeywords}
Arcsine law, Bussgang law, covariance matrix, cross-correlation matrix, one-bit quantization, stationary signals, time-varying thresholds.
\end{IEEEkeywords}

\IEEEpeerreviewmaketitle

\section{Introduction}
\IEEEPARstart{D}{igital} signal processing typically requires the quantization of the signals of interest through analog-to-digital converters (ADCs). In high resolution settings, a very large number of quantization levels is required in order to represent the original continuous signal. However, this leads to some difficulties in modern applications where the signals of interest have large bandwidths, and may pass through several RF chains that require using a plethora of ADCs. Moreover, the overall power consumption and manufacturing cost of ADCs, and chip area grows exponentially with the number of quantization bits. Such drawbacks lend support to the idea of utilizing fewer bits for sampling. The most extreme version of this idea would be to use \emph{one-bit quantization}, in which ADCs are merely comparing the signals with given threshold levels, producing sign ($\pm1$) outputs. This allows for sampling at a very high rate, with a significantly lower cost and energy consumption compared to conventional ADCs \cite{instrumentsanalog,mezghani2018blind,ameri2018one,sedighi2020one}.

In the context of one-bit sampling, until recently, most researchers approached the problem of estimating signal parameters by comparing the signal with a fixed threshold, usually zero. This introduces difficulties in the recovery of the signal amplitude. On the other hand, recent works have shown enhanced estimation performance for the signal parameters by employing time-varying thresholds \cite{qian2017admm,gianelli2016one,khobahi2020model,khobahi2018signal,wang2017angular,xi2020gridless}.\par
The arcsine law is a fundamental statistical property of one-bit sampling \cite{van1966spectrum,jacovitti1994estimation,jacovitti1987methods,bar2002doa}, which connects the covariance of an unquantized signal with that of its quantized counterpart \cite{liu2017one,ameri2018one}. An important disadvantage of the arcsine law is, however, that the one-bit quantization threshold is considered to be zero, which leads to a considerable loss of information. %In \cite{eamaz2021modified}, the \emph{modified arcsine law} was proposed with considering time-varying thresholds which is obtained for a stationary input signal. In this article, we further study and analyze the materials provided in \cite{eamaz2021modified} by employing different recovery approaches.
In this paper, we present a new approach to extending the arcsine law in the context of time-varying sampling thresholds, which can recover the covariance values of the input unquantized signal with accuracy. In particular, we further expand on the ideas we presented in \cite{eamaz2021modified} by employing several competing recovery approaches. Moreover, we propose a new formalism for the Bussgang law \cite{Bussgang1952crosscorrelation,ameri2018one} in the context of time-varying thresholds, which is referred to as the \emph{modified Bussgang law}.

\subsection{Contributions of the Paper}
In this paper, we will study the covariance recovery in one-bit quantization systems with time-varying thresholds. We present a theorem demonstrating an integral representation of the relation between the autocorrelation function of the one-bit sampled data and the covariance matrix elements of the input signal. The obtained integral appears to be difficult to evaluate analytically. To approach this problem, we employ a one-point piece-wise \text{Padé} approximation (PA) to recast the integrands as rational expressions which are readily integrable. Next, we formulate an estimation criterion to recover the desired parameters which are the input signal variance and the input autocorrelation values. In the next step, we present the idea of fast input covariance matrix recovery based on a lemma which relates the mean of the one-bit sampled data to the input signal variance. Furthermore, alongside the PA technique, two well-known numerical integration approaches are employed to recover the input autocorrelation values using the proposed fast recovery algorithm; i.e. the Gauss-Legendre quadrature and the Monte-Carlo integration techniques. Lastly, the modified Bussgang law is presented considering time-varying thresholds. By using the modified Bussgang law, the cross-correlation matrix elements between the input signal and the one-bit sampled signal can be recovered. Numerical examples are presented to showcase the effectiveness of the proposed ideas and to provide an avenue for their comparison.

\subsection{Organization of the Paper}
Section~\rom{2} is dedicated to obtaining the autocorrelation function of the one-bit sampled data with time-varying thresholds in the case of stationary inputs. In Section~\rom{3}, the \text{Padé} Approximation (PA) is utilized to recover the input signal autocorrelation sequence. Moreover, a useful lemma is presented which relates the mean of the one-bit sampled data to the input signal variance laying the ground for a fast input covariance matrix recovery. Sections~\rom{4} and \rom{5} will present two famous numerical integration techniques applied to our fast covariance recovery problem; namely, the Gauss-Legendre quadrature and the Monte-Carlo integration methods. Section~\rom{6} is where the various methods proposed for covariance recovery are compared. The modified Bussgang law for time-varying thresholds in the case of stationary signals is obtained in Section~\rom{7}. Finally, Section~\rom{8} concludes the paper.
\vspace{5pt}

\underline{\emph{Notation:}}
We use bold lowercase letters for vectors, bold uppercase letters for matrices, and uppercase letters for matrix elements. For instance, $\bR_{\mathbf{x}}$ and $R_{\mathbf{x}}(i,j)$ denote the autocorrelation matrix and the $ij$-th element of the autocorrelation matrix of the vector $\mathbf{x}$, respectively. $(\cdot)^{\top}$ and $(\cdot)^{\mathrm{H}}$ denote the vector/matrix transpose, and the Hermitian transpose, respectively. $[a_{ij}]^{N_{1}\times N_{2}}$ is a $N_{1}\times N_{2}$ matrix with $a_{ij}$ as its $ij$-th element. $\mathbb{E}\left\{.\right\}$ denotes the expected value operator. The $Q$-function is defined as
\begin{equation}
\label{eq:135}
\begin{aligned}
Q(x) &= \frac{1}{\sqrt{2 \pi}} \int_{x}^{\infty} e^{-\frac{z^{2}}{2}} \,dz,\\
Q(x) &= 1-Q(-x) = \frac{1}{2}-\frac{1}{2}\operatorname{erf}\left(\frac{x}{\sqrt{2}}\right).
\end{aligned}
\end{equation}
where $\operatorname{erf}(.)$ is the associated error function. Further, $Q^{-1}(x)$ is an inverse $Q$-function.
Finally, the incomplete gamma function is given by
\begin{equation}
\label{eq:136}
\Gamma(s, x)=\int_{x}^{\infty} z^{s-1} e^{-z} \,dz.
\end{equation}
\section{Modified Arcsine Law for Time-Varying Thresholds}
Consider a zero-mean stationary Gaussian input signal, $\mathbf{x}\sim\mathcal{N}\left(\mathbf{0},\bR_{\mathbf{x}}\right)$, where $\bR_{\mathbf{x}}$ is a Toeplitz matrix associated with the autocorrelation function of $\mathbf{x}$, denoted as $R_{\mathbf{x}}$. The input signal $\mathbf{x}\in \mathbb{R}^{N}$ is an arbitrary temporal sequence of the distribution ensembles $\left\{\mathbf{x}(k)\right\}$ where $k \in \left\{1,\cdots,N_{\mathbf{x}}\right\}$. Suppose $x_i$ and $x_j$ are the $i$th and $j$th entries of $\mathbf{x}$, and $\mathbf{y}=f(\mathbf{x})$ is the output of a process in which $f(x)$ is the sign function. The autocorrelation function of the output, denoted by $R_{\mathbf{y}}(l)$, with $l=|i-j|$, is connected to that of $\mathbf{x}$ via the arcsine law~\cite{van1966spectrum,jacovitti1994estimation,jacovitti1987methods}:
\begin{equation}
\label{eq:1}
R_{\mathbf{y}}(i,j)=R_{\mathbf{y}}(l)=\mathbb{E}\left\{y_{i}y_{j}\right\}=\frac{2}{\pi}\sin^{-1}\left(\frac{R_{\mathbf{x}}(l)}{R_{\mathbf{x}}(0)}\right),
\end{equation}
where $y_{i}$ and $y_{j}$ are the $i$th and $j$th entries of $\mathbf{y}$, and $R_{\mathbf{x}}(l)$ denotes the input signal autocorrelation for lag $l$.

\subsection{Autocorrelation Function of the One-Bit Sampled Signal With Time-Varying Thresholds}
We consider a non-zero time-varying Gaussian threshold~$\btau$ that is independent of the input signal, with the distribution $\btau\sim\mathcal{N}\left(\mathbf{d}=\mathbf{1}d,\bSigma\right)$. We define a new random process $\mathbf{w}$ such that $\mathbf{w}=\mathbf{x}-\btau$. Clearly, $\mathbf{w}$ is a Gaussian stochastic process with $\mathbf{w}\sim\mathcal{N}\left(-\mathbf{d},\bR_{\mathbf{x}}+\bSigma=\bP\right)$. The autocorrelation function of the one-bit quantized output process for lag $l$ is studied in the following.
\begin{theorem}
\label{theorem_1}
Suppose %$w_i$ and $w_j$ are the $i$th and $j$th entries of $\mathbf{w}$ ($i\neq j$) with $\mathbb{E}\{w_{i}\}=\mathbb{E}\{w_{i}\}=-d$, and that 
$p_{l}$ and $p_{0}$ denote the autocorrelation term for lag $l\geq 1$, and the variance of~$\mathbf{w}$, respectively. Consider the one-bit quantized random variable $\mathbf{y}=f(\mathbf{w})$. Then, the autocorrelation function of $\mathbf{y}$ takes the form
\begin{equation}
\label{eq:112}
\begin{aligned}
R_{\mathbf{y}}(l)=\frac{e^{\frac{-d^{2}}{p_{0}+p_{l}}}}{\pi\sqrt{\left(p_{0}^{2}-p_{l}^{2}\right)}}\left\{ \int_{0}^{\frac{\pi}{2}} \frac{1}{\beta_{s}}+\sqrt{\frac{\pi}{\beta_{s}}} \frac{\alpha_{s}}{2\beta_{s}} e^{\frac{\alpha_{s}^{2}}{4 \beta_{s}}}\right.\\\left.-\sqrt{\frac{\pi}{\beta_{s}}} \frac{\alpha_{s}}{\beta_{s}} Q\left(\frac{\alpha_{s}}{\sqrt{2 \beta_{s}}}\right) e^{\frac{\alpha_{s}^{2}}{4 \beta_{s}}} \,d\theta\right\}-1,
\end{aligned}
\end{equation}
where $\alpha_{s}$ and $\beta_{s}$ are evaluated as
\begin{equation}
\label{eq:113}
\begin{aligned}
\alpha_{s} &= \frac{d\left(\sin\theta+\cos\theta\right)}{p_0+p_{l}},\\ \beta_{s} &= \frac{p_0-p_{l}\sin 2\theta}{2(p_{0}^2-p_{l}^2)}.\\
\end{aligned}
\end{equation}
\end{theorem}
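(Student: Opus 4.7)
The plan is to reduce the computation of $R_{\mathbf{y}}(l)=\mathbb{E}[\operatorname{sign}(w_i)\operatorname{sign}(w_j)]$ to two bivariate Gaussian integrals using the identity
\begin{equation*}
R_{\mathbf{y}}(l) = 2\bigl[P(w_i>0,\,w_j>0) + P(w_i<0,\,w_j<0)\bigr] - 1,
\end{equation*}
where $(w_i,w_j)$ is jointly normal with mean $(-d,-d)$, variance $p_0$, and cross-covariance $p_l$. Both probabilities can be handled by the same mechanism: applying the sign-flip $(w_i,w_j)\mapsto(-w_i,-w_j)$ to the third-quadrant event, both events reduce to integrating a bivariate Gaussian density over the first quadrant $\{s>0,\,t>0\}$ with mean $\mp(d,d)$.

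For each event, I would then expand the quadratic form in the exponent about the integration variable $(s,t)$. Using $p_0^2 - p_l^2 = (p_0-p_l)(p_0+p_l)$, the exponent splits cleanly into the zero-mean quadratic part $[p_0(s^2+t^2) - 2p_l st]/[2(p_0^2-p_l^2)]$, a linear cross term $\mp d(s+t)/(p_0+p_l)$, and the constant $-d^2/(p_0+p_l)$, which factors out as the $e^{-d^2/(p_0+p_l)}$ prefactor in \eqref{eq:112}.

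Next, I would convert to polar coordinates $s = r\cos\theta$, $t = r\sin\theta$ with $(r,\theta)\in[0,\infty)\times[0,\pi/2]$. The identities $s^2+t^2 = r^2$, $2st = r^2\sin 2\theta$, and $s+t = r(\sin\theta+\cos\theta)$ turn the exponent into $-\beta_s r^2 \mp \alpha_s r$, with $\alpha_s,\beta_s$ matching the definitions in \eqref{eq:113}; the polar Jacobian contributes the leading $r$. Each inner radial integral $\int_0^\infty r\,e^{-\beta_s r^2 \mp \alpha_s r}\,\d r$ is then evaluated by completing the square and substituting $v = \sqrt{2\beta_s}\bigl(r \pm \alpha_s/(2\beta_s)\bigr)$, splitting it into an elementary piece $\int v\,e^{-v^2/2}\,\d v$ that yields $1/(2\beta_s)$ after cancelling the $e^{\alpha_s^2/(4\beta_s)}$ completing-the-square prefactor, plus a tail piece that produces a $Q$-function term.

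Summing the two radial integrals (one per event) and invoking $Q(-x) = 1 - Q(x)$ from \eqref{eq:135}, the elementary parts combine to $1/\beta_s$ and the $Q$-function parts combine to exactly the two remaining $\sqrt{\pi/\beta_s}\,e^{\alpha_s^2/(4\beta_s)}$ terms of the integrand in \eqref{eq:112}. Multiplying the resulting $\theta$-integral by $e^{-d^2/(p_0+p_l)}/\bigl(\pi\sqrt{p_0^2-p_l^2}\bigr)$---where the factor $2$ from the opening identity absorbs the $2$ in the $1/(2\pi\sqrt{p_0^2-p_l^2})$ normalization of the bivariate Gaussian density---and subtracting $1$ delivers \eqref{eq:112}. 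The chief bookkeeping obstacle is keeping the sign conventions on $\alpha_s$ straight across the two events, so that the combined integrand collapses to exactly three terms rather than four.
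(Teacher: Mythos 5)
Your proposal is correct and follows essentially the same route as the paper's own proof: reducing $\mathbb{E}\{f(w_i)f(w_j)\}$ to the two same-sign quadrant probabilities (the paper's Eqs.~(\ref{eq:4})--(\ref{eq:6}) via the normalization identity), flipping the third-quadrant event so the linear term appears with sign $\mp d$, passing to polar coordinates to obtain $\alpha_s,\beta_s$ as in (\ref{eq:113}), and evaluating the radial integral by completing the square and using $Q(-x)=1-Q(x)$, exactly as in the paper's Appendix~A. No gaps; the bookkeeping you flag (signs of $\alpha_s$ across the two events) is handled in the paper the same way and collapses to the three-term integrand of (\ref{eq:112}).
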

\begin{proof}
The covariance matrix of $\mathbf{y}$ can be written as
\begin{equation}
\label{eq:110}
\begin{aligned}
\bR_{\mathbf{y}} &= \mathbb{E}\left\{\mathbf{y}\mathbf{y}^{\mathrm{H}}\right\},\\
&= \frac{1}{\sqrt{(2\pi)^{N}|\bP|}}\int^{\infty}_{-\infty}\mathcal{I}(\mathbf{w}) e^{-\frac{1}{2}(\mathbf{w}+\mathbf{d})^{\mathrm{H}}\bP^{-1}(\mathbf{w}+\mathbf{d})} \mathbf{\,dw},
\end{aligned}
\end{equation}
where $\mathcal{I}(\mathbf{w})=\mathbf{f(w)}\mathbf{f(w)}^{\mathrm{H}}$ and $\mathbf{f(w)}=\left[f(w_{j})\right]^{N}_{j=1}$ is a column vector. Clearly, $\mathcal{I}$ is a matrix including only entries of the form $\pm 1$.
%As can be observed in (\ref{eq:110}), we can solve this equation through the output covariance matrix elements since the covariance matrix can be analyzed by its elements. In the other words, 
Note that one can write the output covariance matrix as \begin{equation}
\label{eq:111}
\bR_{\mathbf{y}}=\left[\mathbb{E}\{y_{i}y_{j}\}\right]^{N\times N}.
\end{equation}
Therefore, the autocorrelation of $f(w_i)$ and $f(w_j)$ is given by
\begin{equation}
\label{eq:71}
\begin{aligned}
R_{\mathbf{y}}(i,j)&=\mathbb{E}\{y_{i}y_{j}\},\\
&= \mathbb{E}\left\{f(w_{i})f(w_{j})\right\},\\
&= \int^{\infty}_{-\infty}\int^{\infty}_{-\infty} f(w_{i}) f(w_{j}) p(w_{i},w_{j}) \,dw_{i}\,dw_{j},
\end{aligned}
\end{equation}
where $p(w_{i},w_{j})$ is the joint Gaussian probability distribution, that can be obtained as 
\begin{eqnarray}
\label{eq:72}
p(w_{i},w_{j})= \qquad\qquad\qquad\qquad \qquad\qquad\qquad\qquad \\ \nonumber \qquad \frac{1}{2\pi\sqrt{p_{0}^2-p_{l}^2}} ~ e^{-\frac{(w_i+d)^2p_{0}+(w_j+d)^2p_{0}-2p_{l}(w_i+d)(w_j+d)}{2(p_{0}^2-p_{l}^2)}}.
\end{eqnarray}
By substituting (\ref{eq:72}) in (\ref{eq:71}), the output autocorrelation function $R_{\mathbf{y}}(i,j)$ can be evaluated as \cite{eamaz2021modified},
\begin{equation}
\label{eq:2}
R_{\mathbf{y}}(i,j)\hspace{-.1cm} = \frac{1}{2\pi\sqrt{p_{0}^2-p_{l}^2}}\int_{-\infty}^{\infty} \int_{-\infty}^{\infty}\hspace{-.1cm} f(w_i)f(w_j)e^{\lambda(d)} \,dw_i\,dw_j
\end{equation}
where $\lambda(d)$ is defined as follows: 
\begin{equation}
\label{eq:3}
\lambda(d)\hspace{-.1cm}=\frac{(w_i+d)^2p_{0}+(w_j+d)^2p_{0}-2p_{l}(w_i+d)(w_j+d)}{-2(p_{0}^2-p_{l}^2)}.
\end{equation}
The autocorrelation function in (\ref{eq:2}) can be rewritten as
\begin{equation}
\label{eq:4}
\begin{aligned}
R_{\mathbf{y}}(i,j)\hspace{-.1cm}=\,& \frac{1}{2\pi\sqrt{p_{0}^2-p_{l}^2}}\left(\int_{0}^{\infty} \int_{0}^{\infty}e^{\lambda(d)}\,dw_i\,dw_j\right.\\ &+\int_{-\infty}^{0}\int_{-\infty}^{0}e^{\lambda(d)}\,dw_i\,dw_j\\ &-\int_{0}^{\infty}\int_{-\infty}^{0}e^{\lambda(d)}\,dw_i\,dw_j\\ &\left.-\int_{-\infty}^{0}\int_{0}^{\infty}e^{\lambda(d)}\,dw_i\,dw_j\right).
\end{aligned}
\end{equation}
We can simplify (\ref{eq:4}) using the relation
\begin{equation}
\label{eq:5}
\frac{1}{2\pi\sqrt{p_{0}^2-p_{l}^2}}\int_{-\infty}^{\infty} \int_{-\infty}^{\infty} e^{\lambda(d)} \,dw_i\,dw_j=1.
\end{equation}
In fact, using \eqref{eq:5} one can verify that
\begin{equation}
\label{eq:6}
\begin{aligned}
R_{\mathbf{y}}(i,j) = \frac{1}{\pi\sqrt{p_{0}^2-p_{l}^2}}\int_{0}^{\infty}\hspace{-.1cm} \int_{0}^{\infty}& \hspace{-.1cm}\left(e^{\lambda(d)}+e^{\lambda(-d)}\right) \,dw_i\,dw_j\\&-1.
\end{aligned}
\end{equation}
By employing polar coordinates $w_i=\rho \cos \theta$, $w_j=\rho \sin \theta$, we can recast the integral in (\ref{eq:6}) as
\begin{equation}
\label{eq:7}
\begin{aligned}
R_{\mathbf{y}}(i,j) = \frac{e^{\frac{-d^2}{p_0+p_{l}}}}{\pi\sqrt{p_{0}^2-p_{l}^2}}\int_{0}^{\frac{\pi}{2}} \hspace{-.1cm} \int_{0}^{\infty}& \hspace{-.1cm} e^{-\beta\rho^2} \hspace{-.1cm}\left(e^{-\alpha\rho}+e^{\alpha\rho}\right)\rho\,d\rho\,d\theta\\ &-1,
\end{aligned}
\end{equation}
where
\begin{equation}
\label{eq:8}
\begin{aligned}
\alpha_{s} &= \frac{d\left(\sin\theta+\cos\theta\right)}{p_0+p_{l}},\\ \beta_{s} &= \frac{p_0-p_{l}\sin 2\theta}{2(p_{0}^2-p_{l}^2)}.
\end{aligned}
\end{equation}
Let $R_{\mathbf{y}}(l)=R_{\mathbf{y}}(i,j)$ with $l=|i-j|$.  Integrating (\ref{eq:7}) with respect to $\rho$ leads to %\ms{Citing \cite{eamaz2021modified} indicates people are going to find the  reason for this or more information on this in that reference which is not the case here. So I removed the citation. Same somewhat applies to the citation above (11).}
\begin{equation}
\label{eq:9}
\begin{aligned}
R_{\mathbf{y}}(l)=\frac{e^{\frac{-d^{2}}{p_{0}+p_{l}}}}{\pi\sqrt{\left(p_{0}^{2}-p_{l}^{2}\right)}}\left\{ \int_{0}^{\frac{\pi}{2}} \frac{1}{\beta_{s}}+\sqrt{\frac{\pi}{\beta_{s}}} \frac{\alpha_{s}}{2\beta_{s}} e^{\frac{\alpha_{s}^{2}}{4 \beta_{s}}}\right.\\\left.-\sqrt{\frac{\pi}{\beta_{s}}} \frac{\alpha_{s}}{\beta_{s}} Q\left(\frac{\alpha_{s}}{\sqrt{2 \beta_{s}}}\right) e^{\frac{\alpha_{s}^{2}}{4 \beta_{s}}} \,d \theta\right\}-1,
\end{aligned}
\end{equation}
a transition for which you can find more detailed derivations in Appendix~A. This completes the proof.
\end{proof}

It remains to evaluate the integral in (\ref{eq:112}) in terms of $p_0$ and $\{p_{l}\}$,  which have to be estimated---a task that is central to our efforts in the rest of this paper. Finding $p_0$ and $\{p_{l}\}$ results in input variance and autocorrelation recovery, which can be achieved by considering the relation:
\begin{equation}
\label{eq:10}
\bR_{\mathbf{x}}(i,j) = \bP(i,j)-\bSigma(i,j).
\end{equation}
For $i=j$, the input variance is thus given by
$\bR_{\mathbf{x}}(i,i) = r_{0} = p_0-\bSigma(i,i),$ 
while for $i\neq j$, we have the input autocorrelation for lag $l=|i-j|$ as $\bR_{\mathbf{x}}(i,j) = \bR_{\mathbf{x}}(l) = r_{l} = p_{l}-\bSigma(i,j)$.

Note that evaluating the integral in (\ref{eq:112}) does not appear to be amenable to a closed-form solution in its general form. Therefore, in the following, we resort to various approximations to facilitate its evaluation, leading to the recovery of the input signal covariance matrix elements.

\section{Analytic Approach for Covariance Recovery}
\label{Proposed method}
To enable an approximation of the autocorrelation values in (\ref{eq:112}), we first resort to the rational \text{Padé} approximation (PA) \cite{basdevant1972pade,brezinski1995taste,gonnet2013robust}. This lays the ground for the recovery of $p_{0}$ and $\{p_{l}\}$, as discussed in Section \ref{subsec:2}.

\subsection{Proposed Rational Approximation}
\label{subsec:1}
According to \cite{chiani2003new}, the $Q$-function is well-approximated by the $\bar{Q}$-function as,
\begin{equation}
\label{eq:11}
\bar{Q}\left(x\right) = \frac{1}{12} e^{\frac{-x^2}{2}} +\frac{1}{4} e^{\frac{-2x^2}{3}}, \quad x > 0.
\end{equation}
We further note that the integral in (\ref{eq:112}) may be evaluated by substituting $D_{1}\left(\theta;p_{0},p_{l},d\right)=\sqrt{\frac{\pi}{\beta_{s}}} \frac{\alpha_{s}}{\beta_{s}} Q\left(\frac{\alpha_{s}}{\sqrt{2 \beta_{s}}}\right) e^{\frac{\alpha_{s}^{2}}{4 \beta_{s}}}$ and $D_{2}\left(\theta;p_{0},p_{l},d\right)\  =\sqrt{\frac{\pi}{\beta_{s}}} \frac{\alpha_{s}}{2\beta_{s}} e^{\frac{\alpha_{s}^{2}}{4 \beta_{s}}}$ with \text{Padé} approximants, that yield the best approximation of a function by a rational function of given order through the \emph{moment matching} technique.

For the sake of completeness, herein we present a brief introduction of the PA method. Suppose $I(t)$ is an \emph{analytic function} at point $t=0$ with the Taylor series:
\begin{equation}
\label{eq:12}
I(t) = \sum_{n=0}^{\infty} c_{n}t^{n}, \quad c_{n}\in \mathbb{R} .
\end{equation}
The PA of order $\left[L/M\right]$ for $I(t)$, denoted by $P^{\left[L/M\right]}(t)$, is defined as a rational function in the form \cite{brezinski1995taste,gonnet2013robust}:
\begin{equation}
\label{eq:13}
P^{\left[L/M\right]}(t) \triangleq \frac{\sum_{n=0}^{L}a_{n}t^{n}}{\sum_{n=0}^{M}b_{n}t^{n}}
\end{equation}
where the coefficients $\left\{a_{n}\right\}$ and $\left\{b_{n}\right\}$ are defined so that
\begin{equation}
\label{eq:14}
\begin{aligned}
\lim_{t\rightarrow 0}\quad \frac{\sum_{n=0}^{L}a_{n}t^{n}}{\sum_{n=0}^{M}b_{n}t^{n}}&=\sum_{n=0}^{L+M}c_{n}t^{n} +O(t^{L+M+1})
\end{aligned}
\end{equation}
with $b_{0}=1$. The moment matching technique is a method widely used to obtain the coefficients of PA. The coefficients $\left\{b_{n}\right\}$ are obtained through the linear system of equations~\cite{brezinski1995taste,gonnet2013robust}:
\begin{equation}
\label{eq:15}
\begin{aligned}
&\left[\begin{array}{cccc}
c_{L-M+1} & c_{L+M+2} & \cdots & c_{L} \\
\vdots & \vdots & \vdots & \vdots \\
c_{L-M+k} & c_{L-M+k+1} & \cdots & c_{L+k-1} \\
\vdots & \vdots & \vdots & \vdots \\
c_{L} & c_{L+1} & \cdots & c_{L+M-1}
\end{array}\right]\left[\begin{array}{c}
b_{M} \\
\vdots \\
b_{k} \\
\vdots \\
b_{1}
\end{array}\right]\\
&=-\left[\begin{array}{c}
c_{L+1} ~
\cdots ~
c_{L+k+1} ~
\cdots ~
c_{L+M}
\end{array}\right]^T
\end{aligned}
\end{equation}
where the matrix in the  left-hand side of (\ref{eq:15}) is a Hankel matrix. Clearly, the determinant of the Hankle matrix must be non-zero to permit a unique solution to the linear system.
The coefficients $\left\{a_{n}\right\}$ are obtained by backsubstitution \cite{eamaz2021modified,brezinski1995taste,gonnet2013robust}:
\begin{equation}
\label{eq:16}
a_{j}=c_{j}+\sum_{i=1}^{\min (M, j)} b_{i} c_{j-i}, \quad 0 \leq j \leq L.
\end{equation}
The selection of the PA order is an important task in approximation; see \cite{basdevant1972pade,brezinski1995taste,gonnet2013robust} for a related study. Note that the integration in (\ref{eq:112}) occurs in the interval $\theta \in \left[0,\frac{\pi}{2}\right]$. To have a better fitness, we use the idea of piece-wise PA. Owing to the fact that the functions $D_{1}\left(\theta;p_{0},p_{l},d\right)$ and $D_{2}\left(\theta;p_{0},p_{l},d\right)$ have their extremum at $\theta=\frac{\pi}{4}$, the selection of three distinct intervals $\left[0,\frac{\pi}{8}\right]$, $\left[\frac{\pi}{8},\frac{3\pi}{8}\right]$, and $\left[\frac{3\pi}{8},\frac{\pi}{2}\right]$ with the expansion points $\theta \in \left\{0,\frac{\pi}{4},\frac{\pi}{2}\right\}$ paves the way for a convenient approximation, with extra boundary points $\frac{\pi}{8}$ and $\frac{3\pi}{8}$ making the chosen intervals symmetric. Moreover, choosing more expansion points to approximate our integrands in (\ref{eq:112}) is not appropriate due to its relatively large computational burden which is caused by relatively large approximated coefficients. By adopting the above piece-wise scheme, the function $D_{2}\left(\theta;p_{0},p_{l},d\right)$ can be approximated as,
\begin{equation}
\label{eq:17}
\begin{aligned}
\theta \in \left[0,\frac{\pi}{8}\right] \cup \left[\frac{3\pi}{8},\frac{\pi}{2}\right]&: \sqrt{\frac{\pi}{\beta_{s}}} \frac{\alpha_{s}}{2\beta_{s}} e^{\frac{\alpha_{s}^{2}}{4 \beta_{s}}}\approx \frac{e+s\theta}{k+g\theta+h\theta^2},\\
\theta \in \left[\frac{\pi}{8},\frac{3\pi}{8}\right]&: \sqrt{\frac{\pi}{\beta_{s}}} \frac{\alpha_{s}}{2\beta_{s}} e^{\frac{\alpha_{s}^{2}}{4 \beta_{s}}}\approx \frac{z+u\theta+v\theta^2}{k\textprime+g\textprime\theta+h\textprime\theta^{2}}.
\end{aligned}
\end{equation}
A similar approximation with same orders can be proposed for $D_{1}\left(\theta;p_{0},p_{l},d\right)$. As mentioned earlier, the two functions $D_{1}\left(\theta;p_{0},p_{l},d\right)$ and  $D_{2}\left(\theta;p_{0},p_{l},d\right)$ should be analytic at the expansion points (which can be easily verified in this case). Accordingly, many diagonal and subdiagonal elements of PA with higher orders could be used; however, the aforementioned interval partitions appear to provide a good approximation while maintaining the simplicity of the integrands.
%The fitness of the proposed approximation is further studied in Section~\ref{sec_fitness}.

The first part of the integration in (\ref{eq:112}) can be analytically evaluated as
\begin{equation}
\label{eq:18}
\begin{aligned}
\int_{0}^{\frac{\pi}{2}} \frac{1}{\beta_{s}} \,d\theta = \sqrt{p_{0}^2-p_{l}^2}\left(\pi+2\tan^{-1}\left[\frac{p_{l}}{\sqrt{p_{0}^2-p_{l}^2}}\right]\right).
\end{aligned}
\end{equation}
Substituting $D_{2}\left(\theta;p_{0},p_{l},d\right)$ with its approximation and evaluating the integration in the associated parts of (\ref{eq:112}) results in:
\begin{equation}
\label{eq:19}
\begin{aligned}
\int_{0}^{\frac{\pi}{8}}& \sqrt{\frac{\pi}{\beta_{s}}}\frac{\alpha_{s}}{2\beta_{s}}e^{\frac{\alpha_{s}^2}{4\beta_{s}}} \,d\theta \approx \frac{s}{2h}\ln{\left(\frac{\left|k+\frac{\pi g}{8}+\frac{\pi^2h}{64}\right|}{\left|k\right|}\right)}+\\&\frac{2eh-sg}{h\sqrt{4hk-g^2}}\tan^{-1}\left(\frac{\pi h\sqrt{4hk-g^2}}{16hk+\pi gh}\right),
\end{aligned}
\end{equation}
\begin{equation}
\label{eq:20}
\begin{aligned}
\int_{\frac{\pi}{8}}^{\frac{3\pi}{8}}& \sqrt{\frac{\pi}{\beta_{s}}}\frac{\alpha_{s}}{2\beta_{s}}e^{\frac{\alpha_{s}^2}{4\beta_{s}}} \,d\theta \approx \frac{\pi v}{4h\textprime}+ \\
&\frac{uh\textprime-vg\textprime}{2h\textprime^2}\ln\left(\frac{\left|64k\textprime+9\pi^2h\textprime+24\pi g\textprime\right|}{\left|64k\textprime+\pi^2h\textprime+8\pi g\textprime\right|}\right)+ \\
&\frac{2vh\textprime k\textprime-2zh\textprime^2+ug\textprime h\textprime-vg\textprime^2}{h\textprime^2\sqrt{4k\textprime h\textprime-g\textprime^2}}\\
&\hspace{1.7cm}\tan^{-1}\left(\frac{-8\pi h\textprime\sqrt{4h\textprime k\textprime-g\textprime^2}}{64h\textprime k\textprime +3\pi^2h\textprime^2+16\pi h\textprime g\textprime}\right),
\end{aligned}
\end{equation}
\begin{equation}
\label{eq:21}
\begin{aligned}
\int_{\frac{3\pi}{8}}^{\frac{\pi}{2}}& \sqrt{\frac{\pi}{\beta_{s}}}\frac{\alpha_{s}}{2\beta_{s}}e^{\frac{\alpha_{s}^2}{4\beta_{s}}} \,d\theta \approx \frac{s}{2h} \ln\left(\frac{\left|k+\frac{\pi g}{2}+\frac{\pi^2h}{4}\right|}{\left|k+\frac{3\pi g}{8}+\frac{9\pi^2 h}{64}\right|}\right)+\\&\frac{2eh-sg}{h\sqrt{4kh-g^2}}\tan^{-1}\left(\frac{\pi h\sqrt{4hk-g^2}}{16kh+3\pi^2h^2+7\pi hg}\right).
\end{aligned}
\end{equation}
Similar approximations can be obtained for terms associated with the function $D_{1}\left(\theta;p_{0},p_{l},d\right)$.

\subsection{Recovery Criterion}
\label{subsec:2}
In this subsection, $p_0$ and $\{p_{l}\}$  are estimated by formulating a minimization problem. For this purpose, one may consider the following criterion \cite{eamaz2021modified}:
\begin{equation}
\label{eq:22}
\begin{aligned}
&\bar{C}(p_0,p_{l}) \triangleq \log\left(\left|R_{\mathbf{y}}(l)-\frac{e^{\frac{-d^{2}}{p_{0}+p_{l}}}}{\pi\sqrt{\left(p_{0}^{2}-p_{l}^{2}\right)}}\left\{ \int_{0}^{\frac{\pi}{2}} \frac{1}{\beta_{s}}\right.\right.\right.\\& \hspace{-.3cm} \left.\left.\left.+\sqrt{\frac{\pi}{\beta_{s}}} \frac{\alpha_{s}}{2\beta_{s}} e^{\frac{\alpha_{s}^{2}}{4 \beta_{s}}}-\sqrt{\frac{\pi}{\beta_{s}}} \frac{\alpha_{s}}{\beta_{s}} Q\left(\frac{\alpha_{s}}{\sqrt{2 \beta_{s}}}\right) e^{\frac{\alpha_{s}^{2}}{4 \beta_{s}}} d \theta\right\}+1\right|^2\right),
\end{aligned}
\end{equation}
where the autocorrelation of output signal ($R_{\mathbf{y}}$) can be estimated with the given sign vector ($\mathbf{y}$) via the sample covariance matrix \cite{hayes2009statistical},
\begin{equation}
\label{eq:23}
\begin{aligned}
\bR_{\mathbf{y}}\approx \frac{1}{N_{\mathbf{x}}} \sum_{k=1}^{N_{\mathbf{x}}} \mathbf{y}(k) \mathbf{y}(k)^{\mathrm{H}}.
\end{aligned}
\end{equation}
Note that by now we have derived an approximated version of~(\ref{eq:112}). Let $H_{s}(p_0,p_l)$ denote this approximation. Therefore, we can alternatively consider the criterion:
\begin{equation}
\label{eq:24}
\begin{aligned}
C(p_0,p_{l}) &\triangleq \log\left(\left|R_{\mathbf{y}}(l)-H_{s}(p_0,p_l)\right|^2\right).
\end{aligned}
\end{equation}
A numerical investigation of \eqref{eq:24} reveals that it is highly multi-modal, with many local minima---see Fig.~\ref{figure_1} for an example of the optimization landscape of $C(p_0,p_{l})$. To filter out the undesired local minima, we resort to  constraints re-enforcing the behaviour of an autocorrelation function. More precisely, we will consider the minimization problem:
\begin{equation}
\label{eq:25}
\begin{aligned}
\mathcal{P}_{\ell}&: &\min_{p_0,p_{l}}& &C(p_0,p_{l}),& &\mbox{s.t.}& &p_0^2 \geq p_{l}^2,& & p_0 \geq 0,
\end{aligned}
\end{equation}
where the first inequality constraint in (\ref{eq:25}) is imposed to ensure that the magnitude of the diagonal elements of the covariance matrix of $\mathbf{w}$ is greater than the magnitude of the off-diagonal elements. The non-convex problem in (\ref{eq:25}) may be solved via the gradient descent numerical optimization approach by employing multiple random initial points.
%The optimal solution of (\ref{eq:25}) should satisfy (\ref{rem1}). Therefore, to have a better estimation of the input covariance matrix, the final solution associated with each initial point must be further processed by the Algorithm~\ref{algorithm_1}.
Once $p_0$ and $\{p_{l}\}$ are obtained, one can estimate the autocorrelation values of $\mathbf{x}$ via \eqref{eq:10}.
%A numerical evaluation of the recovery will be presented in the following.

\begin{figure}[tb]
	\center{\includegraphics[width=0.6\textwidth]
		{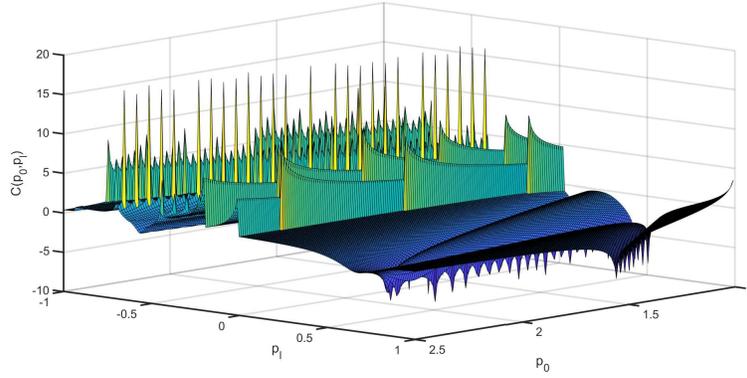}}
	\caption{Example plot of the estimation criterion $C(p_0,p_{l})$ with respect to $p_0$ and $p_{l}$ showing its multi-modality, i.e. having multiple local optima.}
	\label{figure_1}
\end{figure}

\subsection{Optimal Variance Estimation for Fast Covariance Recovery}
\label{sec:fast_recovery}
In Section~\ref{subsec:2}, we suggested that the unknown variables $p_{0}$ and $\{p_{l}\}$ may be recovered through the non-convex program (\ref{eq:25}) with a well-chosen initial point. Nevertheless, solving such a multivariate non-convex problem can costly and finding a proper initial point can be challenging. In this section, we discuss how one can accelerate finding the optimal point in the cost function (\ref{eq:24}). Namely, we introduce the idea of fast covariance matrix recovery by reducing the number of optimization variables. To make this happen, one can estimate the optimal variance  $p_{0}$ based on the following lemma:
\begin{lemma}
\label{remark_1}
The first moment (mean) of the one-bit sampled data, typically approximated as $\bmu\approx\frac{1}{N_{\mathbf{x}}} \sum^{N_{\mathbf{x}}}_{k=1}\mathbf{y}(k)$, depends on the threshold distribution and the power of sampled data via the relation:
\begin{equation}
\label{rem1} \bmu=\mathbb{E}\left\{\mathbf{y}\right\} =\mathbf{1} \mu= \mathbf{1}\left(2Q\left(\frac{d}{\sqrt{p_{0}}}\right)-1\right),
\end{equation}
\begin{proof}
We have
\begin{equation}
\label{pr1-1}
\mathbb{E}\left\{y_{i}\right\} = \int_{-\infty}^{+\infty} f(w_{i}) p(w_{i}) \,dw_{i},
\end{equation}
for $i\in \{1,\cdots,N\}$, where $p(w_{i})=\left(\sqrt{2\pi p_{0}}\right)^{-1} e^{\frac{-\left(w_{i}+d\right)^{2}}{2p_{0}}}$. We can further simplify (\ref{pr1-1}) as
\begin{equation}
\label{pr1-2}
\begin{aligned}
\mathbb{E}\left\{y_{i}\right\} &= -\int_{-\infty}^{0}p(w_{i}) \,dw_{i} + \int_{0}^{\infty} p(w_{i}) \,dw_{i}\\ &= 2\int_{0}^{+\infty} p(w_{i}) \,dw_{i}-1\\ &= 2Q\left(\frac{d}{\sqrt{p_{0}}}\right)-1
\end{aligned}
\end{equation}
which completes the proof.
\end{proof}
\end{lemma}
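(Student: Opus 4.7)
The plan is to compute $\mathbb{E}\{y_i\}$ for an arbitrary index $i$ and then appeal to stationarity to conclude that every entry of $\bmu$ is the same. First I would note that only the marginal distribution of $w_i$ is needed, because the first-moment calculation does not involve any off-diagonal covariances. Since $\mathbf{w}\sim\mathcal{N}(-\mathbf{d},\bP)$ with $\bP=\bR_{\mathbf{x}}+\bSigma$ and $\mathbf{d}=\mathbf{1}d$, each marginal is $w_i\sim\mathcal{N}(-d,p_0)$, where $p_0$ is the common diagonal entry of $\bP$ that already appears in Theorem~\ref{theorem_1}.

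Next, because $y_i = f(w_i) = \operatorname{sign}(w_i)\in\{-1,+1\}$, the expectation collapses to a difference of half-line probabilities:
\begin{equation*}
\mathbb{E}\{y_i\} \;=\; \probP{w_i>0} - \probP{w_i<0} \;=\; 1 - 2\,\probP{w_i<0}.
\end{equation*}
I would evaluate the remaining probability by the standard change of variables $z=(w_i+d)/\sqrt{p_0}\sim\mathcal{N}(0,1)$, giving $\probP{w_i<0} = \probP{z < d/\sqrt{p_0}}$. Using the identity $\Phi(x) = 1 - Q(x)$ recorded in (\ref{eq:135}), this is $1 - Q(d/\sqrt{p_0})$. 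Substituting back yields $\mathbb{E}\{y_i\} = 2Q(d/\sqrt{p_0}) - 1$, which is manifestly independent of $i$.

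Finally, since the resulting expression depends only on the scalar threshold mean $d$ and the scalar variance $p_0$ shared by every coordinate, the mean vector is necessarily of the form $\bmu = \mathbf{1}\mu$ with $\mu = 2Q(d/\sqrt{p_0}) - 1$, as claimed. There is no genuinely hard step here; the only point demanding a little care is the sign of the mean of $w_i$, which is $-d$ rather than $+d$, so that the argument of the $Q$-function ends up being $+d/\sqrt{p_0}$. A useful sanity check that I would include is the two limits $d\to 0^{+}$, giving $\mu\to 0$ (consistent with the classical zero-threshold arcsine law setting), and $d\to+\infty$, giving $\mu\to-1$ (consistent with the fact that a very large positive threshold almost always forces the sign output to $-1$).
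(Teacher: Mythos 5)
Your proof is correct and follows essentially the same route as the paper: both reduce $\mathbb{E}\{y_i\}$ to the difference of the two half-line probabilities of the marginal $w_i\sim\mathcal{N}(-d,p_0)$ and evaluate it via the $Q$-function, with your version merely phrased in probability notation (plus the limiting sanity checks) rather than as the explicit integrals in \eqref{pr1-2}. No gaps to report.
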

We observe that Lemma~\ref{remark_1} unveils a relationship between the input variance and the mean of one-bit sampled data. Therefore, in addition to (\ref{eq:112}), we have another equation to evaluate the variance $p_{0}$.
The input variance as evaluated via Lemma~\ref{remark_1} is given as
\begin{equation}
\label{eq:fast_1}
p_{0}^{\star} = \left(\frac{d}{Q^{-1}\left(\frac{\mu+1}{2}\right)}\right)^{2},
\end{equation}
where $p_{0}^{\star}$ denotes the optimal value of $p_{0}$. Moreover, according to Lemma~\ref{remark_1}, all elements of the one-bit sampled data mean are equal. However, because of using the  approximated mean, some elements can have a negligible difference with each other. In order to compensate these differences, an average of elements may be deployed. Subsequently, based on (\ref{eq:10}), the input variance can be obtained using $p_{0}^{\star}$.
Once $p_{0}$ is obtained, one can estimate $p_{l}$ based on (\ref{eq:112}). As a result, in the PA-based covariance recovery,  problem (\ref{eq:25}) boils down to the single-variable optimization problem,
\begin{equation}
\label{eq:fast_2}
\begin{aligned}
\mathcal{P}_{\ell}&: &\min_{p_{l}}& &C(p_{l}),& &\mbox{s.t.}& &-p_{0}^{\star} \leq p_{l} \leq p_{0}^{\star},
\end{aligned}
\end{equation}
where $C(p_{l})=C\left(p^{\star}_{0},p_{l}\right)$. The objective function of the above optimization problem is still multi-modal---see Fig.~\ref{fig_fast_2} for an example of the optimization landscape of $C(p_{l})$. 
However, the process of finding the optimal point has been made faster by choosing an one-dimensional slice $\left(p^{\star}_{0},p_{l}\right)$ of the feasible space of the objective function (\ref{eq:24}) containing the optimal value of the autocorrelation value $p_{l}$. Consequently, the dependency of the recovery algorithm to choosing an appropriate initial value for $p_{0}$ is eliminated. In other words, by optimal variance substitution in the objective function (\ref{eq:24}), we are effectively removing many poor local optima.

Similar to Section~\ref{subsec:2}, the non-convex problem in (\ref{eq:fast_2}) may be solved via the gradient descent numerical optimization approach by employing multiple random initial points.

\begin{figure}[t]
	\center{\includegraphics[width=0.6\textwidth]{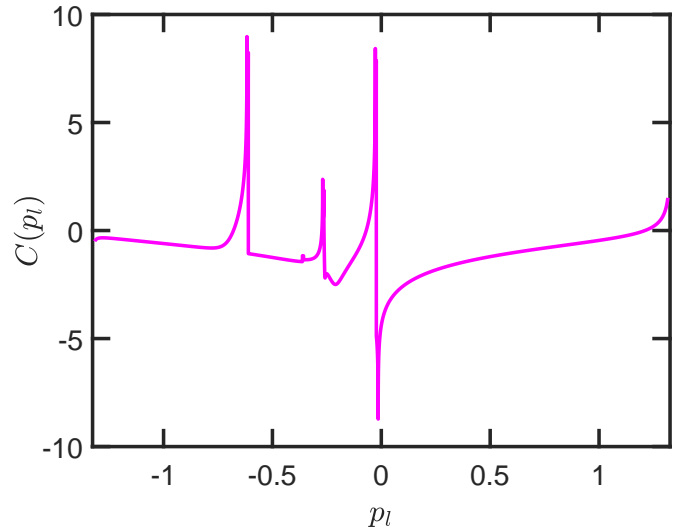}}
	\caption{Example plot of the simplified estimation criterion $C(p_{l})$ with respect to $p_{l}$ showing its multi-modality, i.e. having multiple local optima.}
	\label{fig_fast_2}
\end{figure}
\begin{figure}[t]
	\center{\includegraphics[width=0.6\textwidth]
		{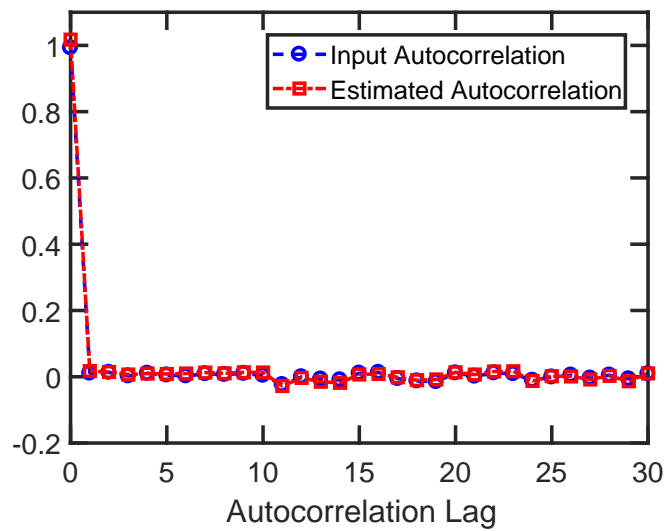}}
	\caption{Recovery of the input signal autocorrelation for a sequence of length $31$ from one-bit sampled data, with the true values plotted along the estimates.}
	\label{figure_2}
\end{figure}
\subsection{Numerical Results}
\label{sec:stationary_analytic}
In this section, we will examine the proposed method by comparing its recovery results with the true input signal autocorrelation values. In all experiments, the input signals were generated as zero-mean Gaussian sequences with unit variance. The number of states $N$ is $100$ ($\mathbf{x}\in \mathbb{R}^{100}$). Accordingly, we made use of the time-varying thresholds with $d=0.3$ and diagonal $\bSigma$ whose diagonal entries are equal to $0.4$. Note that the values of $d$ and $\bSigma$ are best chosen based on the application, considering the magnitude of the input signal.

To show the effectiveness of the proposed approach, we present an example of autocorrelation sequence recovery. The true input signal autocorrelation and the estimated autocorrelation values by our approach are shown in Fig.~\ref{figure_2} for a random sequence of length $31$. Fig.~\ref{figure_2} appears to confirm the possibility of recovering the autocorrelation values from one-bit sampled data with time-varying thresholds.

Next, we investigate the impact of a growing sample size in autocorrelation recovery, and in particular, the variance. We define the normalized mean square error (NMSE) of an estimate $\hat r_{0}$ of a variance $r_{0}$ as
\begin{equation}
\label{eq:26}
\begin{aligned} 
\mathrm{NMSE} \triangleq \frac{|r_{0}-{\hat r_{0}}|^{2}}{|r_{0}|^{2}}.
\end{aligned}
\end{equation}
Each data point presented is averaged over $15$ experiments. As can be seen in Fig.~\ref{figure_3}, the proposed method can estimate the variance elements of an input signal accurately. The results are obtained for the number of ensembles $N_{\mathbf{x}}\in \left\{1000, 3000, 6000,  10000\right\}$, with fixed $d$ and $\bSigma$ for each experiment. As expected, the accuracy of variance recovery will significantly enhance as the number of one-bit samples grows large.

To examine the efficacy of fast covariance matrix recovery method discussed in Section~\ref{sec:fast_recovery}, we consider the same setting for the input signal. Fig.~\ref{figure_3} shows the performance of (\ref{eq:fast_1}) in estimating the input variance. Each data point presented is averaged over $15$ experiments, in which we made use of time-varying thresholds with $d=0.7$ and $\bSigma=0.3\bI$, where $\bI$ denotes the identity matrix. Additionally, Fig.~\ref{fig_fast_3} confirms the possibility of input autocorrelation sequence recovery using (\ref{eq:fast_2}) when the parameters of the time-varying thresholds are set to $d=0.3$ and $\bSigma=0.4\bI$. Fig.~\ref{figure_3} reaffirms that by estimating the optimal variance from (\ref{eq:fast_2}), the accuracy of the variance recovery is improved. Interestingly, in our numerical experiments, solving the criterion (\ref{eq:25}) took $25$ times more CPU time than the single-variable problem proposed in (\ref{eq:fast_2}).

\section{Gaussian Quadrature Technique for Covariance Recovery}
In this section, we will adopt the Gauss-Legendre quadrature method, a well-known numerical integration technique, to evaluate the integral in (\ref{eq:112}). This lays the ground for the recovery of $\left\{p_{l}\right\}$ since $p_{0}$ is obtained by (\ref{eq:fast_1}). Finally, the efficacy of this approach in estimating the input autocorrelation values is numerically evaluated. We will present a brief review of the Gauss-Legendre quadrature technique in \ref{subsec:5}. We will then proceed to obtain an approximated version of (\ref{eq:112}) based on Gauss-Legendre quadrature rule to recover $\{p_{l}\}$, and subsequently, the input autocorrelation values in Section \ref{subsec:6}.

\begin{figure}[t]
	\center{\includegraphics[width=0.6\textwidth]{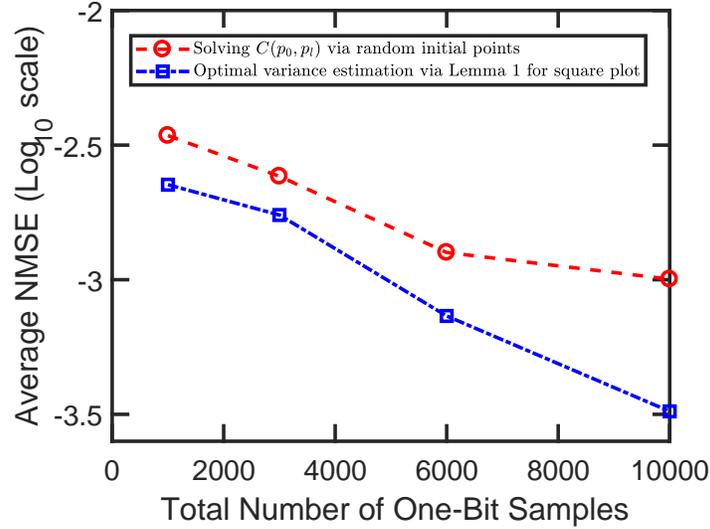}}
	\caption{Average NMSE for signal variance recovery for different one-bit sample sizes when (i) the non-convex program (\ref{eq:25}) with random initial points, and (ii) the closed form formula in Lemma~\ref{remark_1}, are used to evaluate the input variance.}
	\label{figure_3}
\end{figure}

\subsection{The Gauss-Legendre Quadrature Approach: A Short Introduction}
\label{subsec:5}
The quadrature rule is a famous approximation approach in the numerical analysis utilized to approximate the definite integral of a function, which is usually stated as a weighted sum of function values at specified points within the domain of integration \cite{abramowitz1988handbook,golub1969calculation,lether1978construction}. One of the famous forms of the quadrature rule is the Gauss-Legendre quadrature, which can approximate the integral of a function $f(x)$ in $[-1, 1]$ as \cite{abramowitz1988handbook,golub1969calculation},
\begin{equation}
\label{eq:47}
\int_{-1}^{1} f(x) \,dx \approx \sum_{i=1}^{N_{q}} \omega_{i} f\left(x_{i}\right),
\end{equation}
where $\omega_{i}$ are given by \cite{abramowitz1988handbook},
\begin{equation}
\label{eq:48}
\omega_{i}=\frac{2}{\left(1-x_{i}^{2}\right)\left[P_{N_{q}}^{\prime}\left(x_{i}\right)\right]^{2}}.
\end{equation}
The associated orthogonal polynomials, denoted above by $P_{N_{q}}(x)$, are referred to as Legendre polynomials, with the n-th polynomial normalized in a such way that $P_{N_{q}}(1)=1$. In particular, the $i$-th Gauss node, i.e., $x_{i}$, is the $i$-th root of $P_{N_{q}}$. Eq. (\ref{eq:47}) can be extended to a generic interval $[a,b]$ as \cite{abramowitz1988handbook},
\begin{equation}
\label{eq:49}
\begin{aligned}
\int_{a}^{b} f(x) \,dx &=\frac{b-a}{2} \int_{-1}^{1} f\left(\frac{b-a}{2} t+\frac{a+b}{2}\right) \,dt,\\
&\approx \frac{b-a}{2} \sum_{i=1}^{N_{q}} w_{i} f\left(\frac{b-a}{2} t_{i}+\frac{a+b}{2}\right).
\end{aligned}
\end{equation}
The key assumption central to the use of the Gauss-Legendre quadrature method is that the integrand $f(x)$ should be finite within the domain of integration, i.e. $\left|f(x)\right|<\infty$ for $x\in \left[a,b\right]$. The integrands in (\ref{eq:112}) meet this condition; it is easy to verify that  $\textbf{num}(\beta_{s})\neq 0$, where $\textbf{num}(\cdot)$ denotes the numerator of the fractional argument. By employing (\ref{eq:49}), the relation in (\ref{eq:112}) can be approximated as
\begin{equation}
\label{eq:84}
\begin{aligned}
R_{\mathbf{y}}(i,j) &= R_{\mathbf{y}}(l) \approx \frac{e^{\frac{-d^{2}}{p_{0}+p_{l}}}}{\pi\sqrt{\left(p_{0}^{2}-p_{l}^{2}\right)}}\left\{\int_{0}^{\frac{\pi}{2}} \frac{1}{\beta_{s}} \,d\theta \right.\\& \left. -\frac{\pi}{4}\sum_{i=1}^{N_{q}} \omega_{i}D_{1}\left(\frac{\pi}{4}(\theta_{i}+1);p_{0},p_{l},d\right) \right.\\& \left. +\frac{\pi}{4}\sum_{i=1}^{N_{q}} \omega_{i}D_{2}\left(\frac{\pi}{4}(\theta_{i}+1);p_{0},p_{l},d\right)\right\}-1,
\end{aligned}
\end{equation}
where $\theta_{i}$ denotes the $i$-th Gauss node. Note that the first part of the above integration was readily given in closed-form in~(\ref{eq:18}). 
\begin{figure}[t]
	\center{\includegraphics[width=0.6\textwidth]{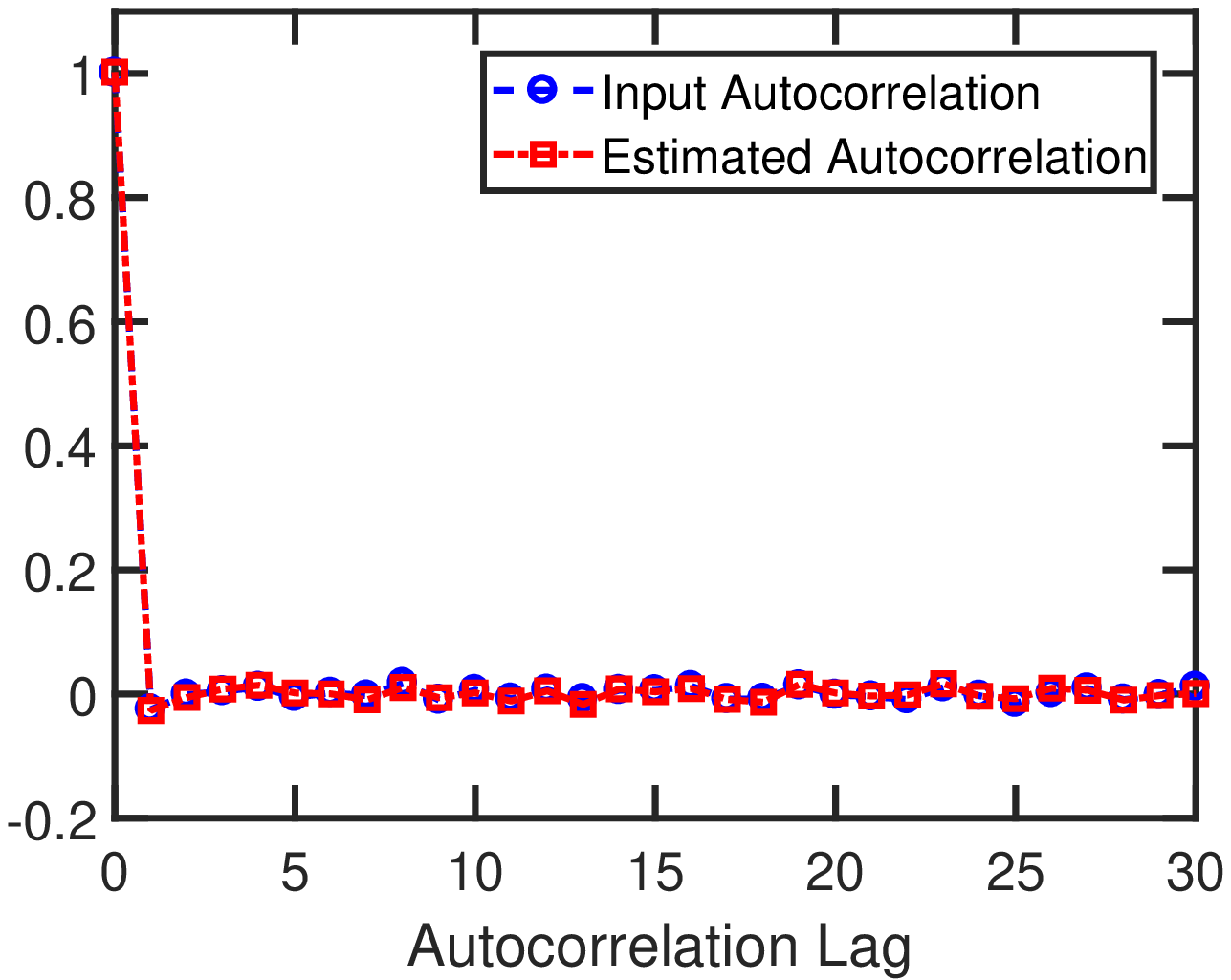}}
	\caption{Recovery of the input signal autocorrelation for a sequence of length $31$ from one-bit sampled data using the fast PA-based recovery algorithm, with the true values plotted along the estimates.}
	\label{fig_fast_3}
\end{figure}
\subsection{Recovery Criterion}
\label{subsec:6}
Based on our discussion in Section~\ref{sec:fast_recovery}, at first $p_{0}^{\star}$ is obtained by (\ref{eq:fast_1}). Then, $\{p_{l}\}$ is estimated by formulating a minimization problem; namely, we consider the following criterion ($p_{0}^{\star}=u$):
\begin{equation}
\label{eq:89}
\begin{aligned}
\bar{\Phi}(p_{l}) &\triangleq \log \left(\left|R_{\mathbf{y}}(l)-\frac{e^{\frac{-d^{2}}{u+p_{l}}}}{\pi\sqrt{\left(u^{2}-p_{l}^{2}\right)}}\left\{\int_{0}^{\frac{\pi}{2}} \frac{1}{\beta_{s}} \,d\theta \right.\right.\right.\\& \left.\left.\left. -\frac{\pi}{4}\sum_{i=1}^{N_{q}} \omega_{i}D_{1}\left(\frac{\pi}{4}(\theta_{i}+1);u,p_{l},d\right)\right.\right.\right.\\& \left.\left.\left.+\frac{\pi}{4}\sum_{i=1}^{N_{q}} \omega_{i}D_{2}\left(\frac{\pi}{4}(\theta_{i}+1);u,p_{l},d\right)\right\}+1\right|^{2}\right).
\end{aligned}
\end{equation}
%where the autocorrelation of the output signal ($R_{\mathbf{y}}$) can be estimated as presented in (\ref{eq:23}).
By now, we have derived an approximated version of (\ref{eq:112}) using the Gauss-Legendre quadrature. Let $J_{s}(p_{l})$ denote this approximation. Therefore, we can alternatively consider the criterion:
\begin{equation}
\label{eq:91}
\Phi(p_{l}) \triangleq \log\left(\left|R_{\mathbf{y}}(l)-J_{s}(p_{l})\right|^{2}\right).
\end{equation}
Surprisingly, the criterion in (\ref{eq:91}) appears to be a convex function with respect to $p_{l}$ (a proof is presented in Appendix~B)---see Fig.~\ref{figure_12} for an example of the optimization landscape of $\Phi(p_{l})$. By considering the feasible region of $\{p_{l}\}$, the following problem is cast:
\begin{equation}
\label{eq:93}
\begin{aligned}
\mathcal{P}_{\ell}&: &\min_{p_{l}}& &\Phi(p_{l}),& &\mbox{s.t.}& &-u \leq p_{l} \leq u.
\end{aligned}
\end{equation}
%where the first inequality constraints are imposed due to the same reasons as discussed in Section~\ref{subsec:2}.
The convex problem in (\ref{eq:93}) may be solved efficiently via the golden section search and parabolic interpolation approach.
%solved via the gradient descent numerical optimization approach by employing multiple random initial points.
Once $\{p_{l}\}$ is obtained, one can estimate the input autocorrelation values $\{r_{l}\}$ via \eqref{eq:10}. The acquired optimum recovery results will be presented in the following.

\begin{figure}[t]
	\center{\includegraphics[width=0.6\textwidth]{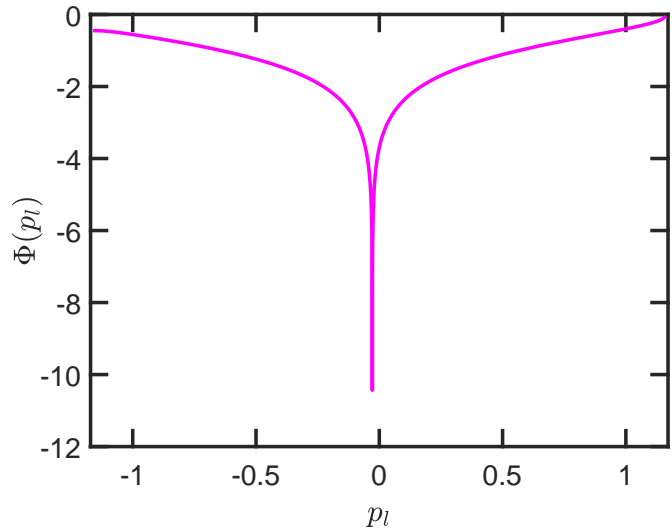}}
	\caption{Example plot of the Gauss-Legendre quadrature approach-based estimation criterion $\Phi(p_{l})$ with respect to $p_{l}$ showing its convexity.}
	\label{figure_12}
\end{figure}

%\begin{figure}[t]
%\hspace{-.00cm}
%	\center{\includegraphics[width=0.48\textwidth]
%		{3D_stationary_numeric.eps}}
%	\caption{Example plot of the Gauss-Legendre quadrature approach-based estimation criterion $\Phi(p_0,p_{l})$ with respect to $p_0$ and $p_{l}$ showing its multi-modality, i.e. having multiple local optima.}
%	\label{figure_12}
%\end{figure}

%\ms{I'm getting the feeling that a lot of text are kind of repeated for the the last two methods from what you had in the first. We got to innovate in and freshen up the writing. It should not be super difficult to use some new sentences (of course can keep a few that are great).}

%\textcolor{purple}{MS: Was this taken care of?} \textcolor{orange}{done}

\subsection{Numerical Results}
\label{subsec:18}
We will now examine the Gauss-Legendre quadrature approach by comparing its recovery results with the true input signal autocorrelation values. In all experiments, the input signals were generated as zero-mean Gaussian sequences with unit variance. Accordingly, we made use of the time-varying thresholds with $d=0.3$ and a diagonal matrix $\bSigma$ whose diagonal entries are set to $0.1$.

\begin{figure}[t]
%\hspace{-.00cm}
	\center{\includegraphics[width=0.6\textwidth]{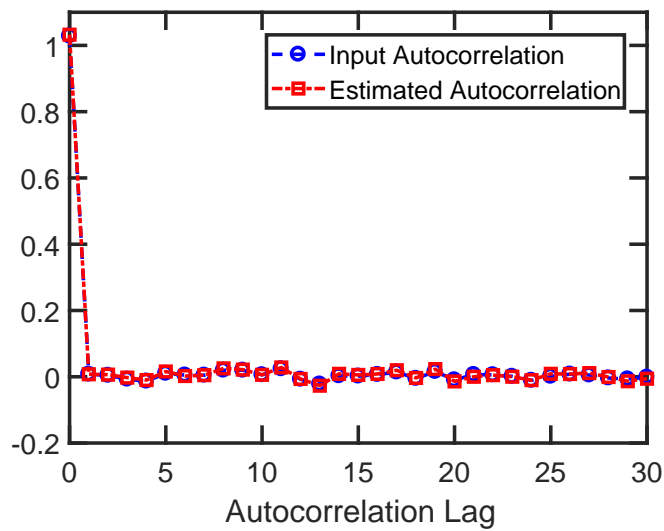}}
	\caption{Recovery of the input signal autocorrelation for a sequence of length $31$ from one-bit sampled data using the Gauss-Legendre quadrature approach, with the true values plotted along the estimates.}
	\label{figure_8}
\end{figure}

%\begin{figure}[t]
%\hspace{-.09cm}
%	{\includegraphics[width=0.47\textwidth]
%		{GAU_nmse.eps}}
%	\caption{Average NMSE for signal variance recovery for different one-bit sample sizes using the Gauss-Legendre quadrature approach.}
%	\label{figure_9}
%\end{figure}

%\begin{figure}[t]
%\hspace{-.00cm}
	%\center{\includegraphics[width=0.36\textwidth]
	%	{modelGQ.eps}}
	%\caption{Average NMSE for signal variance recovery respect to different number of quadrature points using the Gauss-Legendre quadrature approach.}
%	\label{figure_90}
%\end{figure}

To show the effectiveness of the Gauss-Legendre quadrature approach, we present an example of autocorrelation sequence recovery. The true input signal autocorrelation and the estimated autocorrelation values by this approach are shown in Fig.~\ref{figure_8} for a random sequence of length $31$. Fig.~\ref{figure_8} appears to confirm the possibility of recovering the autocorrelation values in our example. The number of quadrature points $N_{q}$ is considered to be $13$ based on our model performance.

%Next, we investigate the NMSE of the variance as done in Section~\ref{sec:stationary_analytic}. Fig.~\ref{figure_9} illustrates that the Gauss-Legendre quadrature method can estimate the variance elements of an input signal rather precisely. Each data point presented is averaged over $15$ experiments, with the results obtained for data lengths $1000,3000,6000,$ and $10000$. As can be observed, the accuracy of variance recovery will significantly enhance as the number of one-bit samples grows large. %In Fig.~\ref{figure_90}, the performance of the Gauss-Legendre quadrature technique in signal variance recovery respect to different number of quadrature points is depicted. Based on this figure, $N_{q}=13$ is just an example to illustrate the recovery results.

\section{Monte-Carlo Integration for Covariance Recovery}
In this section, another well-known approach referred to as the Monte-Carlo integration is utilized to evaluate the integral in~(\ref{eq:112}); as deemed essential for the recovery of $\{p_{l}\}$ since $p_{0}$ is obtained by (\ref{eq:fast_1}). We begin by a brief overview of the Monte-Carlo integration method  in Section \ref{subsec:9}. We then move to formulate an approximated version of (\ref{eq:112}) based on the Monte-Carlo integration technique. Lastly, the efficacy of this approach in estimating the input autocorrelation values is numerically evaluated.

\subsection{The Monte-Carlo Integration Method: An Overview}
\label{subsec:9}
The Monte-Carlo integration is another extensively used approach in numerical analysis to approximate the definite integral of a function, stated as an expectation of the function over uniform random variables as below \cite{evans2000approximating,caflisch1998monte}:
\begin{equation}
\label{eq:50}
\begin{aligned}
\mathbb{E}\left\{f(x)\right\} ~\,&=\int^{b}_{a} f(x)p(x)\,dx\approx \frac{1}{N_{m}}\sum^{N_{m}}_{i=1}f(x_{i}),\\ p(x)=\frac{1}{b-a} &\Rightarrow \int^{b}_{a} f(x) \,dx \approx \frac{b-a}{N_{m}} \sum_{i=1}^{N_{m}} f(x_{i}),
\end{aligned}
\end{equation}
where $p(x)=\frac{1}{b-a}$ is the uniform probability distribution in the interval $\left[a,b\right]$. % and $x_{i}$ are generated as uniform random variables
By employing (\ref{eq:50}), the expression in (\ref{eq:112}) may be rewritten as
\begin{equation}
\label{eq:98}
\begin{aligned}
R_{\mathbf{y}}(i,j) = R_{\mathbf{y}}(l) &\approx \frac{e^{\frac{-d^{2}}{p_{0}+p_{l}}}}{\pi\sqrt{\left(p_{0}^{2}-p_{l}^{2}\right)}}\left\{\int_{0}^{\frac{\pi}{2}} \frac{1}{\beta_{s}} \,d\theta \right.\\&\quad  \left. -\frac{\pi}{2N_{m}}\sum_{i=1}^{N_{m}} D_{1}\left(\theta_{i};p_{0},p_{l},d\right) \right.\\& \quad \left. +\frac{\pi}{2N_{m}}\sum_{i=1}^{N_{m}} D_{2}\left(\theta_{i};p_{0},p_{l},d\right)\right\}-1,
\end{aligned}
\end{equation}
where $\theta_{i}$ denotes the $i$-th random number generated from the uniform distribution in the interval $\left[0,\frac{\pi}{2}\right]$. Note that the first part of the above integral was readily evaluated in closed-form in~(\ref{eq:18}).

\subsection{Recovery Criterion}
\label{subsec:10}
Similar to two previous cases, at first $p_{0}^{\star}$ is obtained by (\ref{eq:fast_1}). Then, the parameter of interest, i.e., $\{p_{l}\}$, is estimated by formulating a minimization problem. Namely, we consider the following criterion ($u=p_{0}^{\star}$):
\begin{equation}
\label{eq:99}
\begin{aligned}
\bar{\Omega}(p_{l}) &\triangleq \log \left(\left|R_{\mathbf{y}}(l)-\frac{e^{\frac{-d^{2}}{u+p_{l}}}}{\pi\sqrt{\left(u^{2}-p_{l}^{2}\right)}}\left\{\int_{0}^{\frac{\pi}{2}} \frac{1}{\beta_{s}} \,d\theta \right.\right.\right.\\& \left.\left.\left. -\frac{\pi}{2N_{m}}\sum_{i=1}^{N_{m}} D_{1}\left(\theta_{i};u,p_{l},d\right)\right.\right.\right.\\& \left.\left.\left.+\frac{\pi}{2N_{m}}\sum_{i=1}^{N_{m}} D_{2}\left(\theta_{i};u,p_{l},d\right)\right\}+1\right|^{2}\right),
\end{aligned}
\end{equation}
where the autocorrelation of output signal $R_{\mathbf{y}}$ is estimated via (\ref{eq:23}). Suppose an approximated version of (\ref{eq:112}) has been obtained using the Monte-Carlo integration approach, which is denoted by $F_{s}(p_{l})$. Thus, the above criterion may be approximated via the following:
\begin{equation}
\label{eq:100}
\Omega(p_{l}) \triangleq \log\left(\left|R_{\mathbf{y}}(l)-F_{s}(p_{l})\right|^{2}\right).
\end{equation}
Similar to the previous criterion (\ref{eq:91}), $\Omega(p_{l})$ appears to be a convex function respect to $p_{l}$, whose proof of convexity is similar to that of $\Phi(.)$ in Appendix~B---see Fig.~\ref{figure_14} for an example of the optimization landscape of $\Omega(p_{l})$. By considering the feasible region of $\{p_{l}\}$, we can formulate the following recovery problem:
\begin{equation}
\label{eq:101}
\begin{aligned}
\mathcal{P}_{\ell}&: &\min_{p_{l}}& &\Omega(p_{l}),& &\mbox{s.t.}& &-u \leq p_{l} \leq u,
\end{aligned}
\end{equation}
%where the inequality constraints are imposed due to the same reasons as discussed in \ref{subsec:2}.
which may be tackled using the same tools as proposed in Section~\ref{subsec:6}. The recovery of $\{p_{l}\}$ results in estimating the autocorrelation values of $\mathbf{x}$ via (\ref{eq:10}). The obtained recovery results will be investigated in the following.
%\begin{figure}[t]
%\hspace{-.00cm}
%	\center{\includegraphics[width=0.48\textwidth]
%		{3D_stationary_monto.eps}}
%	\caption{Example plot of the estimation criterion $\Omega(p_0,p_{l})$ with respect to $p_0$ and $p_{l}$ showing its multi-modality, i.e. having multiple local optima.}
%	\label{figure_14}
%\end{figure}

\begin{figure}[t]
	\center{\includegraphics[width=0.6\textwidth]{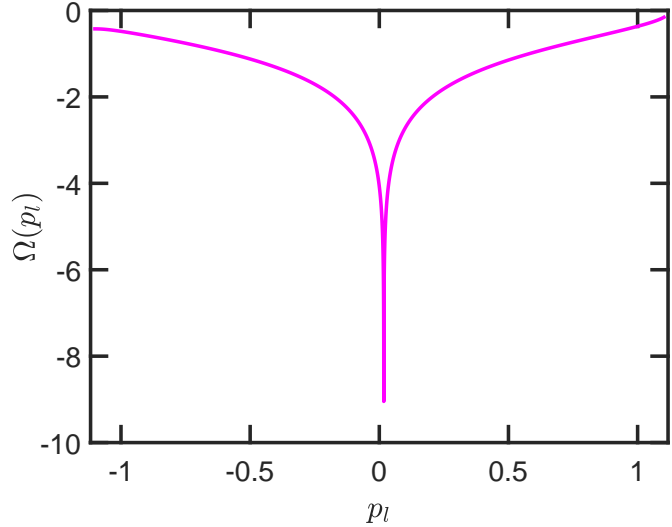}}
	\caption{Example plot of the estimation criterion $\Omega(p_{l})$ with respect to $p_{l}$ showing its convexity.}
	\label{figure_14}
\end{figure}

\subsection{Numerical Results}
\label{subsec:17}
We will examine the Monte-Carlo integration approach by comparing its recovery results with the true input signal autocorrelation values. In all experiments, the input signals were generated as zero-mean Gaussian sequences with unit variance. Accordingly, we made use of the time-varying thresholds with $d=0.3$ and a diagonal matrix $\bSigma$ whose diagonal entries are set to $0.1$.

To show the efficacy of the Monte-Carlo-based approach, we compare the input signal autocorrelation values for $31$ lags with the true values as presented in Fig.~\ref{figure_15}. The number of nodes ($N_{m}$) was experimentally set to $2000$ based on our model error.

\section{Comparing the Proposed Recovery Methods}
We will now compare all the discussed approaches in the autocorrelation sequence recovery for stationary signals. We will take advantage of (\ref{eq:fast_1}) to obtain the optimal value of $p_{0}$ in (\ref{eq:112}). To recover the desired parameter $\left\{p_{l}\right\}$ for $l\neq0$, we presented three approaches: (i) employing the \text{Padé} approximation of the integrands in (\ref{eq:112}), also referred to as the PA technique, (ii) applying the Gauss-Legendre quadrature technique, and (iii) applying the Monte-Carlo integration to evaluate the integral in (\ref{eq:112}). As was observed before, all three approaches show promising recovery results. To numerically compare these approaches, we consider input signals $\mathbf{x}\in \mathbb{R}^{5}$ generated as zero-mean Gaussian sequences with unit variance. The time-varying threshold setting is the same as Section~\ref{subsec:18}. As a metric for comparisons, we use the experimental mean square error (MSE) of an estimate $\hat r_{l}$ of an autocorrelation value $r_{l}$, defined as
\begin{equation}
\label{eq:4600000}
\begin{aligned}
\mathrm{MSE} \triangleq \frac{1}{EL} \sum^{E}_{e=1}\sum^{L}_{l=1} |r^{e}_{l}-{\hat r^{e}_{l}}|^{2},
\end{aligned}
\end{equation}
where $\left\{r^{e}_{l},\hat r^{e}_{l}\right\}$ are the autocorrelation values and their estimates at the $e$-th experiment, with the number of lags set to $L=4$. The number of experiments is assumed to be $E=5$. The results are obtained for the number of ensembles $N_{\mathbf{x}}\in\left\{1000, 3000, 6000, 10000\right\}$.

Fig.~\ref{figure_66} shows that the Gauss-Legendre method has a better performance in recovering the input signal autocorrelation values in comparison with the PA technique and the Monte-Carlo integration. Other than the PA-based recovery, the proposed numerical approaches are capable of recovering the input autocorrelation values via convex programming, which makes them appealing. Nevertheless, the proper selection of the number of nodes and quadrature points in the Gauss-Legendre quadrature and the Monte-Carlo integration techniques is crucial and may present itself as a bottleneck in an effective recovery.

\vspace{5pt}

\underline{\emph{Remark:}} Since the true input signal autocorrelation values are unknown \emph{a priori}, the above observation hints at the practical value of the PA technique. On the other hand, one can use the outcome of the PA technique as an initial estimate, to subsequently run the other slightly improved approximation techniques. 

\begin{figure}[t]
%\hspace{-.00cm}
	\center{\includegraphics[width=0.6\textwidth]{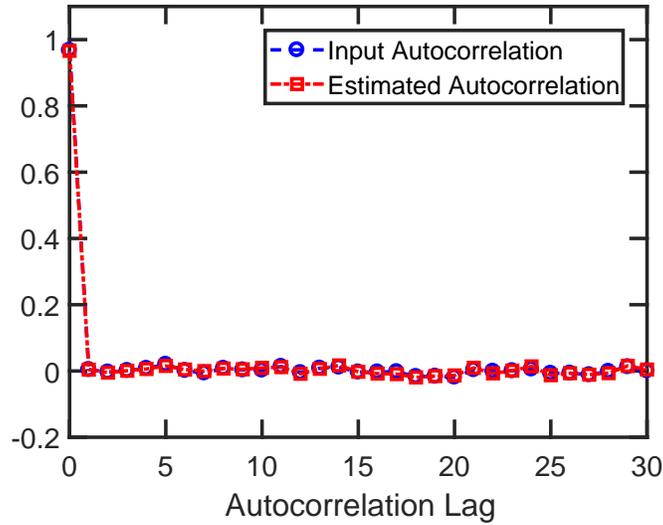}}
	\caption{Recovery of the input signal autocorrelation for a sequence of length $31$ from one-bit sampled data using the Monte-Carlo integration approach, with the true values plotted alongside the estimates.}
	\label{figure_15}
\end{figure}

\section{Modified Bussgang Law for Time-Varying Sampling Thresholds}
In addition to the arcsine law, the Bussgang law unveils an important connection in stochastic analysis of one-bit correlation data. It states that the cross-correlation of a Gaussian signal before and after it has passed through the nonlinear sampling operation is equal to its autocorrelation up to a constant \cite{Bussgang1952crosscorrelation}. In this section, at first, we review the original Bussgang law and its formalism for one-bit quantization systems. Secondly, a modified Bussgang law is presented for cases where the input signals are sampled using time-varying thresholds.

\begin{figure}[t]
%\hspace{-.00cm}
	\center{\includegraphics[width=0.6\textwidth]{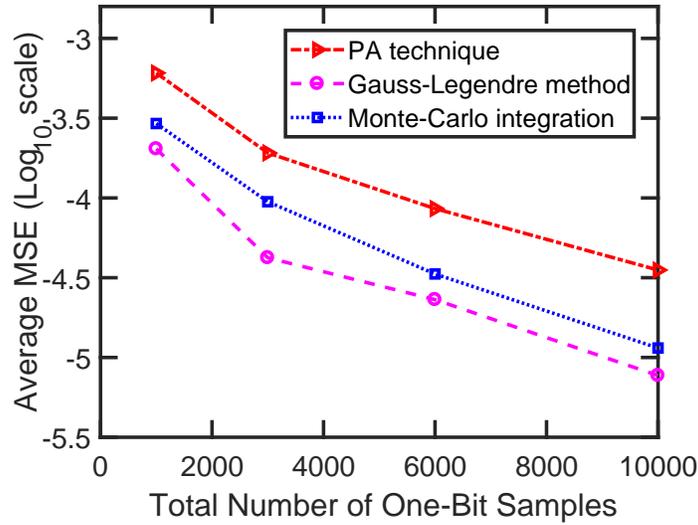}}
	\caption{Comparing the three proposed methods (PA technique when we use the fast covariance recovery formulation (\ref{eq:fast_2}), Gauss-Legendre method and Monte-Carlo integration) in recovering the input stationary signal autocorrelation by average obtained  MSE for different one-bit sample sizes.}
	\label{figure_66}
\end{figure}

\subsection{The Bussgang Law for One-Bit Quantization}
The \emph{Bussgang law} informs on the second order statistics of one-bit sampled data by relating the cross-correlation function of a stationary zero-mean Gaussian input signal $\mathbf{x}$ and the output $\mathbf{y}$ of a nonlinear memoryless amplitude-distortion function with the autocorrelation function of the input signal as follows~\cite{Bussgang1952crosscorrelation}:
\begin{equation}
\label{eq:73}
\bR_{\mathbf{xy}}=C \bR_{\mathbf{x}},
\end{equation}
where $R_{\mathbf{xy}}$ is the cross-correlation function between input and output signals of the nonlinear system ($\mathbf{y}=g(\mathbf{x})$ where $g(.)$ is the nonlinear memoryless amplitude-distortion function). Also, $C$ is defined as \cite{Bussgang1952crosscorrelation},
\begin{equation}
\label{eq:74}
C=\frac{1}{ \sqrt{2 \pi R^{3}_{\mathbf{x}}(0)}} \int_{-\infty}^{\infty} x_{i} g(x_{i}) e^{-\frac{x_{i}^{2}}{2 R_{\mathbf{x}}(0)}} \,dx_{i},
\end{equation}
where $x_{i}$ is the $i$-th entry of $\mathbf{x}$. If we consider $g(.)$ to be a sign function, %($f(.)$), 
we have a one-bit quantization system and $C$ is obtained as,
\begin{equation}
\label{eq:75}
\begin{aligned}
C %&= \frac{1}{ \sqrt{2 \pi R^{3}_{\mathbf{x}}(0)}} \int_{-\infty}^{\infty} x_{i} f(x_{i}) e^{-\frac{x_{i}^{2}}{2 R_{\mathbf{x}}(0)}} d x_{i},\\
%&= \frac{1}{ \sqrt{2 \pi R^{3}_{\mathbf{x}}(0)}} \left(\int_{0}^{\infty} x_{i} e^{-\frac{x_{i}^{2}}{2 R_{\mathbf{x}}(0)}} d x_{i}-\int_{-\infty}^{0} x_{i} e^{-\frac{x_{i}^{2}}{2 R_{\mathbf{x}}(0)}} d x_{i}\right),\\
%&=  \frac{2}{ \sqrt{2 \pi R^{3}_{\mathbf{x}}(0)}} \int_{0}^{\infty} x_{i} e^{-\frac{x_{i}^{2}}{2 R_{\mathbf{x}}(0)}} d x_{i},\\
%= \sqrt{\frac{2}{ \pi R_{\mathbf{x}}(0)}} 
= \sqrt{\frac{2}{\pi}}\; R_{\mathbf{x}}^{-\frac{1}{2}}(0).
\end{aligned}
\end{equation}

\subsection{The Modified Bussgang Law}
When we consider time-varying thresholds, the cross-correlation matrix between the one-bit sampled signal and the input signal can be written in the following form.
\begin{theorem}
\label{theorem_3}
Suppose $\btau\sim\mathcal{N}\left(\mathbf{d}=\mathbf{1}d,\bSigma\right)$ is a time-varying threshold, and $\mathbf{x}$ is a stationary input signal. Let $\mathbf{y}=g\mathbf{(w)}$ denote the one-bit sampled data, where $\mathbf{w}=\mathbf{x}-\btau$, with $p_{0}$ denoting its associated variance. Then, the cross correlation matrix between $\mathbf{y}$ and $\mathbf{x}$ satisfies the relation,
\begin{equation}
\label{eq:140}
\bR_{\mathbf{yx}}=\bR_{\mathbf{y}\btau}+ \left[C_{1}\left(\bR_{\mathbf{x}}+\Sigma\right)+ C_{2}d\left(\bR_{\mathbf{x}}+\Sigma-p_{0}\mathcal{U}\right)\right],
\end{equation}
where $\mathcal{U}$ is an all-one matrix, and $C_{1}$ and $C_{2}$ are given by
\begin{equation}
\label{eq:141}
\begin{aligned}
C_{1} &= \sqrt{\frac{2}{\pi p_{0}}}\Gamma\left(1,\dfrac{d^2}{2p_0}\right)-\frac{d}{\sqrt{\pi p^{2}_{0}}}\left(\Gamma\left(\dfrac{1}{2},\dfrac{d^2}{2p_0}\right)-\sqrt{{\pi}}\right),\\ C_{2} &= -\frac{1}{p_{0}} \operatorname{erf}\left(\frac{d}{\sqrt{2p_{0}}}\right).
\end{aligned}
\end{equation}
\end{theorem}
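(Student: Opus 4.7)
The plan is to decompose the cross-correlation via $\mathbf{x} = \mathbf{w} + \btau$, giving $\bR_{\mathbf{yx}} = \bR_{\mathbf{yw}} + \bR_{\mathbf{y}\btau}$, and then to compute $\bR_{\mathbf{yw}}$ elementwise by exploiting the joint Gaussianity of $(w_i, w_j)$. Since $\mathbf{x}$ and $\btau$ are independent, setting $l = |i-j|$ and $p_l = \bR_{\mathbf{x}}(l) + \bSigma(i,j)$, the pair $(w_i, w_j)$ is jointly Gaussian with mean $(-d,-d)$, common variance $p_0$, and cross-covariance $p_l$. The tower property then yields
\begin{equation*}
\mathbb{E}\{g(w_i)\, w_j\} = \mathbb{E}\{g(w_i)\,\mathbb{E}\{w_j \mid w_i\}\} = -d\mu + \frac{p_l}{p_0}\bigl[\,\mathbb{E}\{g(w_i)\, w_i\} + d\mu\,\bigr],
\end{equation*}
where $\mu = \mathbb{E}\{g(w_i)\}$, so that the entire problem reduces to evaluating two single-variable Gaussian moments.

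Next, Lemma~\ref{remark_1} already supplies $\mu = 2Q(d/\sqrt{p_0}) - 1 = -\operatorname{erf}(d/\sqrt{2p_0})$. For $\mathbb{E}\{g(w_i)\, w_i\}$, I would split the defining integral at the origin, substitute $u = w_i + d$ on each half-line, and use the elementary antiderivative $\int u\, e^{-u^2/(2p_0)}\,du = -p_0\, e^{-u^2/(2p_0)}$. After collecting the boundary contributions and the residual $\int e^{-u^2/(2p_0)}\,du$ pieces, this yields
\begin{equation*}
\mathbb{E}\{g(w_i)\, w_i\} = \sqrt{\tfrac{2p_0}{\pi}}\, e^{-d^2/(2p_0)} + d\,\operatorname{erf}\!\left(\tfrac{d}{\sqrt{2p_0}}\right).
\end{equation*}
Invoking the identities $\Gamma(1,x) = e^{-x}$ and $\Gamma(1/2,x) = \sqrt{\pi}\,\operatorname{erfc}(\sqrt{x})$, one then recognizes $C_1 = \mathbb{E}\{g(w_i)\, w_i\}/p_0$ in precisely the stated form and $C_2 = \mu/p_0 = -\operatorname{erf}(d/\sqrt{2p_0})/p_0$.

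Substituting these identifications back, the conditional-expectation identity becomes $\mathbb{E}\{g(w_i)\, w_j\} = C_1\, p_l + C_2\, d\,(p_l - p_0)$, which is exactly the $(i,j)$-entry of $C_1\,\bP + C_2\, d\,(\bP - p_0\, \mathcal{U})$ with $\bP = \bR_{\mathbf{x}} + \bSigma$; the formula is automatically consistent at $i=j$, since then $p_l = p_0$ and the $\mathcal{U}$ term vanishes, leaving $C_1 p_0 = \mathbb{E}\{g(w_i) w_i\}$. Adding back $\bR_{\mathbf{y}\btau}$ from the initial decomposition produces the modified Bussgang law as stated. The only real obstacle is bookkeeping in the evaluation of $\mathbb{E}\{g(w_i)\, w_i\}$: carefully executing the shift $u = w_i + d$, tracking the sign split of $g(\cdot)$, and repackaging the resulting $\operatorname{erf}$ and Gaussian-tail factors into the incomplete-gamma form used to express $C_1$; no machinery beyond Gaussian conditioning and elementary antidifferentiation is required.
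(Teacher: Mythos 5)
Your proposal is correct, and it reaches the paper's result by a genuinely different (and more elementary) route for the central computation. You share the paper's first step — writing $\mathbf{x}=\mathbf{w}+\btau$ so that $\bR_{\mathbf{yx}}=\bR_{\mathbf{yw}}+\bR_{\mathbf{y}\btau}$, which is exactly the paper's identity (\ref{eq:83}) — but where the paper evaluates $R_{\mathbf{yw}}(i,j)$ as a bivariate Gaussian double integral, completing the square in the inner integral (Appendix~C, (\ref{eq:95})--(\ref{eq:170})) to reduce it to the two one-dimensional integrals $C_1,C_2$ of (\ref{eq:81}) and then computing those for $g=\operatorname{sign}$ via incomplete gamma functions in (\ref{eq:130}), you instead invoke Gaussian conditioning: $\mathbb{E}\{w_j\mid w_i\}=-d+\tfrac{p_l}{p_0}(w_i+d)$ plus the tower property collapses everything to the two scalar moments $\mu=\mathbb{E}\{g(w_i)\}$ (already supplied by Lemma~\ref{remark_1}) and $\mathbb{E}\{g(w_i)w_i\}=\mathbb{E}\{|w_i|\}$, the folded-normal mean. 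Both routes land on the same intermediate identity $R_{\mathbf{yw}}(i,j)=C_1p_l-C_2d(p_0-p_l)$ with $C_1=\mathbb{E}\{g(w_i)w_i\}/p_0$ and $C_2=\mu/p_0$, and your closed forms match (\ref{eq:141}) after the identities $\Gamma(1,x)=e^{-x}$ and $\Gamma(\tfrac12,x)=\sqrt{\pi}\operatorname{erfc}(\sqrt{x})$; your check that the diagonal entries are automatically consistent ($p_l=p_0$ gives $C_1p_0=\mathbb{E}\{g(w_i)w_i\}$) is a nice touch the paper leaves implicit. What your approach buys is brevity and transparency: it bypasses the completing-the-square bookkeeping of Appendix~C and explicitly reuses Lemma~\ref{remark_1}, in the spirit of the classical Bussgang argument. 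What the paper's direct integration buys is a self-contained derivation of the general-$g$ expressions (\ref{eq:81}) without appealing to conditional-expectation formulas, though your conditioning argument covers general memoryless $g$ just as well, so the difference is mainly one of presentation rather than scope.
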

\begin{proof}
Suppose $w_i$ and $w_j$ are the $i$-th and $j$-th entries of $\mathbf{w}$ ($i\neq j$) with $\mathbb{E}\{w_{i}\}=\mathbb{E}\{w_{i}\}=-d$, and that $p_{l}$ and $p_{0}$ denote the autocorrelation term for lag $l=|i-j|$ and variance of~$\mathbf{w}$, respectively. Consider the quantized random variables $y_{i}=g(w_{i})$ and $y_{j}=g(w_{i})$, where $g(.)$ denotes the non-linear transformation function. The cross-correlation function between $w_{i}$ and $y_{j}$ can thus be obtained as below:
\begin{equation}
\label{eq:76}
\begin{aligned}
R_{\mathbf{yw}}(i,j) &= \frac{1}{2\pi\sqrt{p_{0}^2-p_{l}^2}} \int^{\infty}_{-\infty}\int^{\infty}_{-\infty} w_{i}g(w_{j})e^{\lambda(d)}\,dw_{i}\,dw_{j}
\end{aligned}
\end{equation}
where $\lambda(d)$ is defined in  (\ref{eq:5}). We begin by evaluating the integral in (\ref{eq:76}) with respect to $w_{i}$ as,
\begin{equation}
\label{eq:80}
\begin{aligned}
R_{\mathbf{yw}}(i,j) &= \frac{e^{\frac{-d^{2}}{p_0+p_{l}}}}{2\pi\sqrt{p_{0}^2-p_{l}^2}} \int^{\infty}_{-\infty}  g(w_{j})e^{\frac{2d(p_{0}-p_{l})w_{j}+w^{2}_{j}p_{0}}{-2(p^{2}_{0}-p^{2}_{l})}}\\ &\int^{\infty}_{-\infty}w_{i}e^{\frac{2d(p_{0}-p_{l})w_{i}+w^{2}_{i}p_{0}-2p_{ij}w_{i}w_{j}}{-2(p^{2}_{0}-p^{2}_{l})}} \,dw_{i}\,dw_{j}\\ &=  C_{1}p_{l}-C_{2} d(p_{0}-p_{l}),
\end{aligned}
\end{equation}
where $C_{1}$ and $C_{2}$ are given by
\begin{equation}
\label{eq:81}
\begin{aligned}
C_{1} &= \frac{1}{\sqrt{2\pi p^{3}_{0}}}\int^{\infty}_{-\infty} w_{j}g(w_{j})e^{-\frac{(w_{j}+d)^{2}}{2p_{0}}}\,dw_{j},\\
C_{2} &= \frac{1}{\sqrt{2\pi p^{3}_{0}}}\int^{\infty}_{-\infty}g(w_{j})e^{-\frac{(w_{j}+d)^{2}}{2p_{0}}}\,dw_{j}.
\end{aligned}
\end{equation}
A detailed proof of the results in  (\ref{eq:80}) and (\ref{eq:81}) is presented in Appendix~C. Next note that  (\ref{eq:80}) can be rewritten as
\begin{equation}
\label{eq:82}
\begin{aligned}
\bR_{\mathbf{yw}} &= C_{1}\bR_{\mathbf{w}}-d C_{2}\left(p_{0}\mathcal{U}-\bR_{\mathbf{w}}\right),
\end{aligned}
\end{equation}
where $\mathcal{U}$ is an all-one matrix,  and $\bR_{\mathbf{yw}}$ can be simplified as
\begin{equation}
\label{eq:83}
\begin{aligned}
\bR_{\mathbf{yw}} &= \mathbb{E}\{\mathbf{y}(\mathbf{x}-\btau)^{\mathrm{H}}\},\\
&= \mathbb{E}\{\mathbf{y}\mathbf{x}^{\mathrm{H}}\}-\mathbb{E}\{\mathbf{y}\btau^{\mathrm{H}}\},\\ &= \bR_{\mathbf{yx}}-\bR_{\mathbf{y}\btau}.
\end{aligned}
\end{equation}
Since the covariance matrix of $\mathbf{w}$ is $\bR_{\mathbf{w}}=\bR_{\mathbf{x}}+\Sigma$, our \emph{modified Bussgang law} will thus take the form,
\begin{equation}
\label{eq:77}
\bR_{\mathbf{yx}}-\bR_{\mathbf{y}\btau}= (C_{1}+d C_{2})\left(\bR_{\mathbf{x}}+\Sigma\right)-d C_{2}p_{0}\mathcal{U}. \end{equation}
If the nonlinear function $g(.)$ is the sign function, $C_{1}$ and $C_{2}$ are given in closed-form as,
\begin{equation}
\label{eq:130}
\begin{aligned}
C_{1} &=\frac{1}{\sqrt{2\pi p^{3}_{0}}}\int^{\infty}_{0} w_{j}\left\{e^{-\frac{(w_{j}+d)^{2}}{2p_{0}}}+e^{-\frac{(w_{j}-d)^{2}}{2p_{0}}}\right\}\,dw_{j}\\
&= \sqrt{\frac{2}{\pi p_{0}}}\Gamma\left(1,\dfrac{d^2}{2p_0}\right)-\frac{d}{\sqrt{\pi p^{2}_{0}}}\left(\Gamma\left(\dfrac{1}{2},\dfrac{d^2}{2p_0}\right)-\sqrt{{\pi}}\right),\\
C_{2} &= \frac{1}{\sqrt{2\pi p^{3}_{0}}}\int^{\infty}_{0} \left\{e^{-\frac{(w_{j}+d)^{2}}{2p_{0}}}-e^{-\frac{(w_{j}-d)^{2}}{2p_{0}}}\right\}\,dw_{j}\\
&= -\frac{1}{p_{0}} \operatorname{erf}\left(\frac{d}{\sqrt{2p_{0}}}\right),
\end{aligned}
\end{equation}
where $\Gamma(.,.)$ denotes an \emph{incomplete gamma function}~\cite{jameson2016incomplete,amore2005asymptotic}. Based on (\ref{eq:77}), the cross-correlation matrix between the input and the output one-bit data are computed where $p_{0}$ is obtained by (\ref{eq:fast_1}) and $\{p_{l}\}$ can be either recovered using (\ref{eq:fast_2}), (\ref{eq:93}) or (\ref{eq:101}). In addition, $\bR_{\mathbf{x}}$ is obtained through (\ref{eq:10}). Note that the cross-correlation matrix between the threshold vector $\btau$ and the output vector $\mathbf{y}$ can be estimated via a \emph{sample cross-correlation matrix} as,
\begin{equation}
\label{eq:85}
\bR_{\mathbf{y}\btau}\approx \frac{1}{N_{\mathbf{x}}} \sum_{k=1}^{N_{\mathbf{x}}} \mathbf{y}(k) \btau(k)^{\mathrm{H}}.
\end{equation}
\end{proof}
Note that the reliance of the cross-correlation recovery on the recovery of the autocorrelation values paves the way for the three proposed autocorrelation recovery approaches to be used as an intermediate stage for cross-correlation recovery via our modified Bussgang law. This will lead to cross-correlation recovery methods with various levels of accuracy.
\begin{figure*}[t]
	\centering
	\begin{subfigure}[b]{0.45\textwidth}
		\includegraphics[width=1\linewidth]{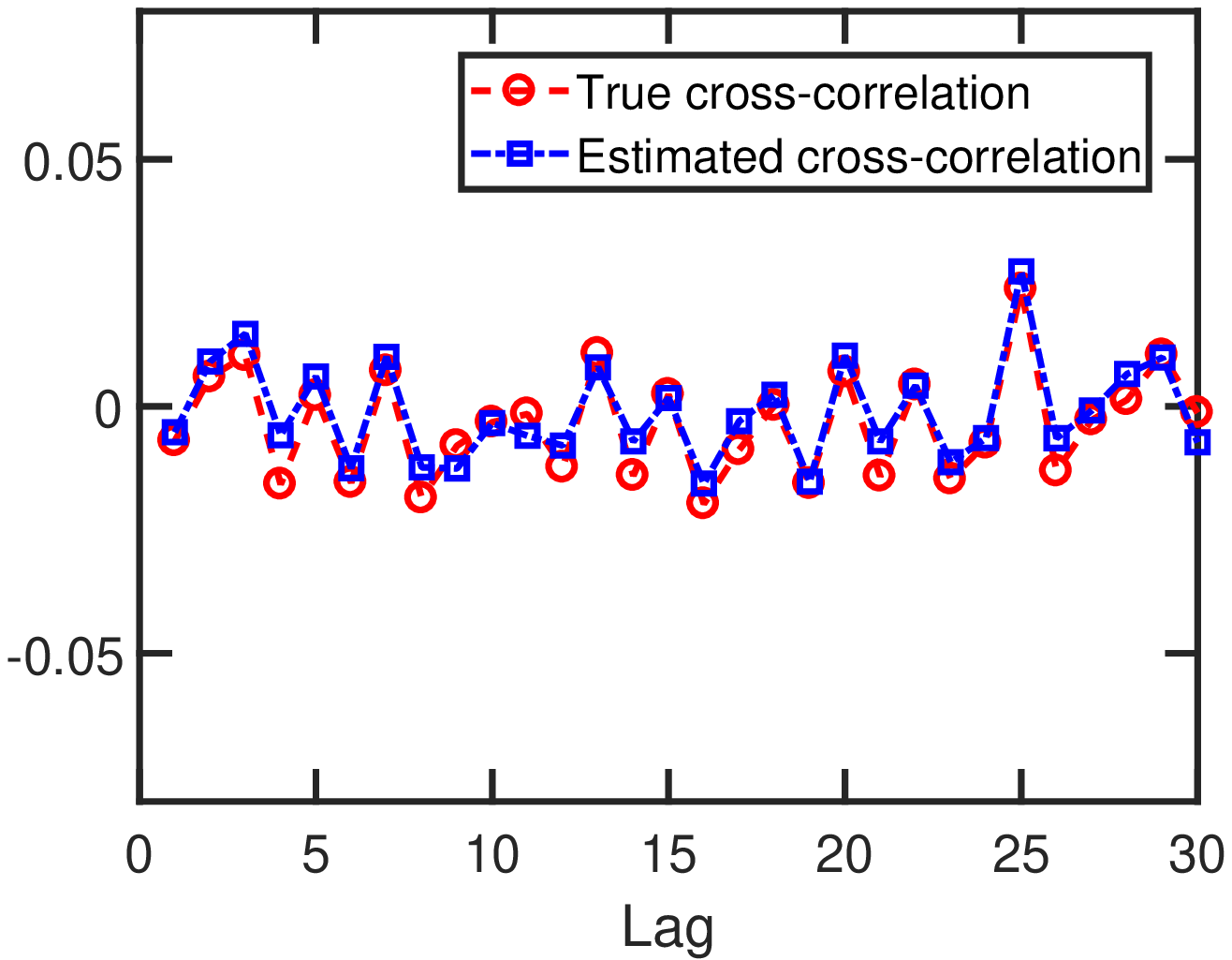}
		\caption{\text{Padé} approximation technique}
	\end{subfigure}
	\begin{subfigure}[b]{0.45\textwidth}
		\includegraphics[width=1\linewidth]{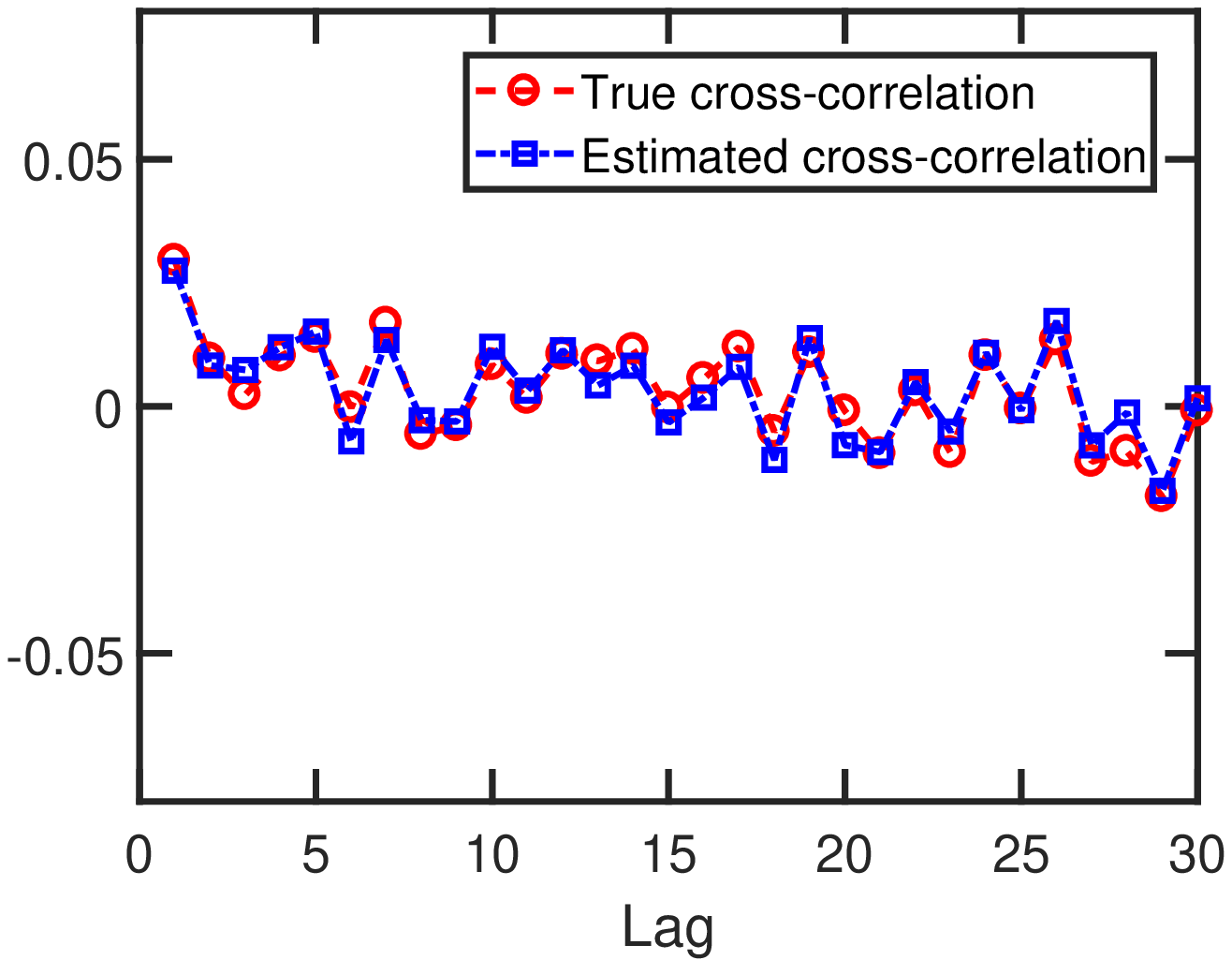}
		\caption{Gauss-Legendre quadrature technique}
	\end{subfigure}
	\begin{subfigure}[b]{0.45\textwidth}
		\includegraphics[width=1\linewidth]{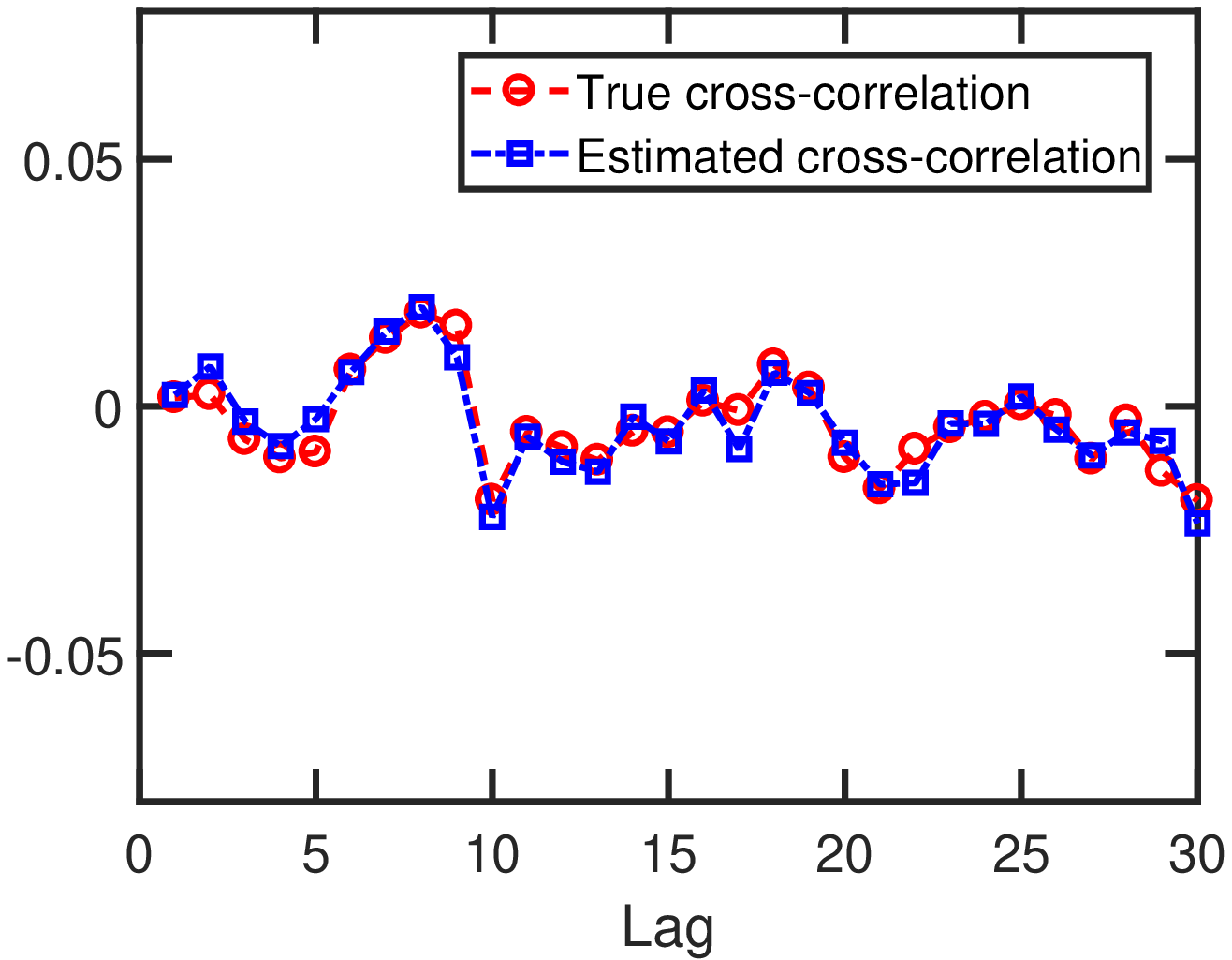}
		\caption{Monte-Carlo integration technique}
	\end{subfigure}
	\caption{The recovery of the cross-correlation between the input signal and the one-bit sampled data by the modified Bussgang law applied in conjunction with  various one-bit autocorrelation recovery approaches for a sequence of length $30$, with the true values plotted alongside the estimates.}
	\label{figure_80}
\end{figure*}
\subsection{A Numerical Investigation of the Modified Bussgang Law}
In this section, we will examine the proposed modified Bussgang law by comparing its recovery results with the true cross-correlation values between the input signal and the one-bit sampled data. In all experiments, the input signal settings are the same as in Section~\ref{sec:stationary_analytic}. The time-varying threshold settings are as follows: (a) PA: $d=0.1$ and $\bSigma=0.2\bI$, (b) Gauss-Legendre: $d=0.3$ and $\bSigma=0.1\bI$, (c) Monte-Carlo: $d=0.3$ and $\bSigma=0.1\bI$, where $\bI$ denotes the identity matrix.

In order to showcase the effectiveness of the proposed approach, we present an example of cross-correlation sequence recovery. The true cross-correlation between the input signal and the one-bit sampled data and the estimated cross-correlation values by our approach are shown in Fig.~\ref{figure_80} for a random sequence of length $30$. Fig.~\ref{figure_80} appears to confirm the possibility of recovering the cross-correlation values from one-bit sampled data with time-varying thresholds by employing any of the three recovery methods (PA, Gauss-Legendre method and Monte-Carlo integration). The difference between the true values and the estimated values in Fig.~\ref{figure_80} is presumably for the most part due to the numerical approximations used for the error function, and the incomplete gamma function utilized in (\ref{eq:140}). In addition, estimation error in the autocorrelation recovery used to estimate the desired parameters $p_{0}$ from (\ref{eq:fast_1}) and $p_{l}$ from (\ref{eq:fast_2}), (\ref{eq:93}) and (\ref{eq:101}), can propagate to the cross-correlation recovery as well.
\section{Conclusion}
We proposed a modified arcsine law  that can make use of non-zero time-varying thresholds in one-bit sampling when the input signal is assumed to be stationary. Our extended results take advantage of \text{Padé} approximations, as well as numerical approaches such as the Gauss-Legendre  and the Monte-Carlo integration techniques. The numerical results showcased the effectiveness of the proposed approaches in recovering the autocorrelation values. %The capabilities of various methods and approximations were thoroughly compared. 
We finalized our work  by proposing a modified Bussgang law for one-bit sampling of stationary  signals with time-varying thresholds.
\appendices
\section{Detailed Derivations for the Integral in (\ref{eq:7})}
The focus herein is on obtaining the ultimate formalism for $R_{\mathbf{y}}(i,j)$ in (\ref{eq:9}) from the relation in (\ref{eq:7}). 
In particular, based on (\ref{eq:7}), we define $\zeta(\alpha_{s},\beta_{s})
\triangleq \int_{0}^{\infty} e^{-\beta_{s}\rho^2}\left(e^{-\alpha_{s}\rho}+e^{\alpha_{s}\rho}\right)\rho\,d\rho$ and simplify it as,
\begin{equation}
\label{eq:59}
\begin{aligned}
\zeta(\alpha_{s},\beta_{s}) &= \int_{0}^{\infty}\left(e^{-\beta_{s} \rho^{2}+\alpha_{s} \rho}+e^{-\beta_{s} \rho^{2}-\alpha_{s} \rho}\right)\rho \,d\rho\\ &= \int_{0}^{\infty} e^{\frac{\alpha^{2}_{s}}{4 \beta_{s}}}\left(e^{-\beta_{s}\left(\rho^{2}+\frac{\alpha_{s}}{\beta_{s}} \rho+\frac{a^{2}}{4 \beta^{2}_{s}}\right)}\right.\\ &\left.+e^{-\beta_{s}\left(\rho^{2}-\frac{\alpha_{s}}{\beta_{s}} \rho+\frac{\alpha^{2}_{s}}{4 \beta^{2}_{s}}\right)}\right)\rho\,d\rho\\ &=\int_{0}^{\infty} e^{\frac{\alpha^{2}_{s}}{4 \beta_{s}}}\left(e^{-\beta_{s}\left(\rho+\frac{\alpha_{s}}{2\beta_{s}}\right)^{2}}+e^{-\beta_{s}\left(\rho-\frac{\alpha_{s}}{2\beta_{s}}\right)^{2}}\right)\rho \,d\rho.
\end{aligned}
\end{equation}
We can now split the integration in (\ref{eq:59}) into two parts as below:
\begin{equation}
\label{eq:60}
\begin{aligned}
\zeta(\alpha_{s},\beta_{s}) &= e^{\frac{\alpha^{2}_{s}}{4 \beta_{s}}} \int_{0}^{\infty} e^{-\beta_{s}\left(\rho+\frac{\alpha_{s}}{2\beta_{s}}\right)^{2}} \rho \,d\rho\\ &+e^{\frac{\alpha^{2}_{s}}{4 \beta_{s}}} \int_{0}^{\infty} e^{-\beta_{s}\left(\rho-\frac{\alpha_{s}}{2\beta_{s}}\right)^{2}} \rho \,d\rho\\ &= e^{\frac{a^{2}}{4 \beta_{s}}} \int_{\frac{\alpha_{s}}{\beta_{s} 2}}^{\infty} e^{-\beta_{s}(a)^{2}}\left(a-\frac{\alpha_{s}}{2\beta_{s}}\right)\,da\\ &+e^{\frac{\alpha^{2}_{s}}{4 \beta_{s}}} \int_{-\frac{a}{2\beta_{s}}}^{\infty} e^{-\beta_{s}(a)^{2}}\left(a+\frac{\alpha_{s}}{\beta_{2} 2}\right) \,da\\ &= \mathbb{I}_{1}+\mathbb{I}_{2},
\end{aligned}
\end{equation}
where $\mathbb{I}_{1}$ is constructed as,
\begin{equation}
\label{eq:61}
\begin{aligned}
\mathbb{I}_{1} &= e^{\frac{\alpha^{2}_{s}}{4 \beta_{s}}} \int_{\frac{\alpha_{s}}{2\beta_{s}}}^{\infty} e^{-\beta_{s}(a)^{2}}a \,da-\frac{\alpha_{s}}{2\beta_{s}} e^{\frac{\alpha^{2}_{s}}{4 \beta_{s}}} \int_{\frac{\alpha_{s}}{2\beta_{s}}}^{\infty} e^{-\beta_{s}(a)^{2}} \,da\\
&= \frac{1}{2\beta_{s}} e^{\frac{\alpha^{2}_{s}}{4 \beta_{s}}} \int_{\frac{\alpha^{2}_{s}}{4\beta_{s}}}^{\infty} e^{-u} \,du-\sqrt{\frac{\pi}{\beta_{s}}} \frac{\alpha_{s}}{2\beta_{s}} e^{\frac{\alpha^{2}_{s}}{4 \beta_{s}}} \frac{1}{\sqrt{2 \pi}} \int_{\frac{\alpha_{s}}{\sqrt{2 \beta_{s}}}}^{\infty} e^{-\frac{u^{2}}{2}} \,du\\ &= \frac{1}{2 \beta_{s}}-\sqrt{\frac{\pi}{\beta_{s}}} \frac{\alpha_{s}}{\beta_{s} 2} e^{\frac{\alpha^{2}_{s}}{4 \beta_{s}}} Q\left(\frac{\alpha_{s}}{\sqrt{2 \beta_{s}}}\right).
\end{aligned}
\end{equation}
Similar to above process, we have $\mathbb{I}_{2}=\frac{1}{2 \beta_{s}}+\sqrt{\frac{\pi}{\beta_{s}}} \frac{\alpha_{s}}{2\beta_{s}} e^{\frac{\alpha^{2}_{s}}{4 \beta_{s}}} Q\left(-\frac{\alpha_{s}}{\sqrt{2 \beta_{s}}}\right)$. The relation $Q(x)=1-Q(-x)$ proves helpful to rewrite $\mathbb{I}_{2}$ as $\frac{1}{2 \beta_{s}}+\sqrt{\frac{\pi}{\beta_{s}}} \frac{\alpha_{s}}{2\beta_{s}} e^{\frac{\alpha^{2}_{s}}{4 \beta_{s}}}\left\{1-Q\left(\frac{\alpha_{s}}{\sqrt{2 \beta_{s}}}\right)\right\}$. As a result, we can rewrite $\zeta(\alpha_{s},\beta_{s})$ as
\begin{equation}
\label{eq:62}
\begin{aligned}
\zeta(\alpha_{s},\beta_{s}) &=\mathbb{I}_{1}+\mathbb{I}_{2}\\
&= \frac{1}{\beta_{s}}+\sqrt{\frac{\pi}{\beta_{s}}} \frac{\alpha_{s}}{2\beta_{s}} e^{\frac{\alpha^{2}_{s}}{4 \beta_{s}}}-\sqrt{\frac{\pi}{\beta_{s}}} \frac{\alpha_{s}}{\beta_{s}} e^{\frac{\alpha^{2}_{s}}{4 \beta_{s}}} Q\left(\frac{\alpha_{s}}{\sqrt{2 \beta_{s}}}\right).
\end{aligned}
\end{equation}
Hence, we  obtain our ultimate formula for $R_{\mathbf{y}}(i,j)$ as below:
\begin{equation}
\label{eq:63}
\begin{aligned}
R_{\mathbf{y}}(i,j)=\frac{e^{\frac{-d^{2}}{p_{0}+p_{l}}}}{\pi\sqrt{\left(p_{0}^{2}-p_{l}^{2}\right)}}\left\{ \int_{0}^{\frac{\pi}{2}} \frac{1}{\beta_{s}}+\sqrt{\frac{\pi}{\beta_{s}}} \frac{\alpha_{s}}{2\beta_{s}} e^{\frac{\alpha^{2}_{s}}{4 \beta_{s}}}\right.\\\left.-\sqrt{\frac{\pi}{\beta_{s}}} \frac{\alpha_{s}}{\beta_{s}} Q\left(\frac{\alpha_{s}}{\sqrt{2 \beta_{s}}}\right) e^{\frac{a^{2}}{4 \beta_{s}}} \,d\theta\right\}-1.
\end{aligned}
\end{equation}

\section{Proof of The Convexity of $\Phi(p_{l})$ in (\ref{eq:91})}
Since $\log(\cdot)$ is a strictly increasing function, it is thus only required to analyze the criterion $\Phi_{m}(p_{l})=\left(R_{\mathbf{y}}(l)-J_{s}(p_{l})\right)^{2}$ to show the convexity of $\Phi(p_{l})$. Taking the derivative of $\Phi_{m}(p_{l})$ with respect to $p_{l}$ results in
\begin{equation}
\label{eq:app_1}
\Phi_{m}^{\prime}(p_{l}) = -2\left(R_{\mathbf{y}}(l)-J_{s}(p_{l})\right)J_{s}^{\prime}(p_{l}),
\end{equation}
where $J_{s}$ is the approximated version of (\ref{eq:112}) using the Gauss-Legendre quadrature presented in (\ref{eq:89}). Mathematically, $J_{s}(p_{l})$ can be represented by the following closed-form formula:
\begin{equation}
\label{eq:app_2}
\begin{aligned}
J_{s}(p_{l})=\frac{e^{-\frac{d^{2}}{p_{0}+p_{l}}}}{\pi}&\left(\pi+2\tan^{-1}\left(\frac{p_{l}}{\sqrt{p_{0}^{2}-p_{l}^{2}}}\right)\right.\\&\left.+\frac{\pi I}{4\sqrt{p_{0}^{2}-p_{l}^{2}}}\right)-1,
\end{aligned}
\end{equation}
where $I$ is given by
\begin{equation}
\label{eq:app_3}
I = \sum_{i=1}^{N_{q}} \omega_{i}\sqrt{\frac{\pi}{\beta_{s}}}\left(\frac{\alpha_{s}}{\beta_{s}}\right)\left(\frac{1}{2}-Q\left(\frac{\alpha_{s}}{\sqrt{2\beta_{s}}}\right)\right)e^{\frac{\alpha_{s}^{2}}{4\beta_{s}}}.
\end{equation}
Based on (\ref{eq:app_2}) and (\ref{eq:app_3}), $J_{s}^{\prime}(p_{l})$ can be written as
\begin{equation}
\label{eq:app_4}
\begin{aligned}
J_{s}^{\prime}(p_{l})&=e^{-\frac{d^{2}}{p_{0}+p_{l}}}\left(\frac{2}{\pi\sqrt{p_{0}^{2}-p_{l}^{2}}}+\frac{d^{2}\left(\pi+2\sin^{-1}\left(\frac{p_{l}}{p_{0}}\right)\right)}{\pi\left(p_{0}+p_{l}\right)^{2}}\right)\\&+\frac{e^{-\frac{d^{2}}{p_{0}+p_{l}}}}{4\sqrt{p_{0}^{2}-p_{l}^{2}}}\left(\frac{\partial I}{\partial \alpha_{s}}\frac{\partial \alpha_{s}}{\partial p_{l}}+\frac{\partial I}{\partial \beta_{s}}\frac{\partial \beta_{s}}{\partial p_{l}}\right)\\&+\left(\frac{e^{-\frac{d^{2}}{p_{0}+p_{l}}}\left(d^{2}\left(p_{0}-p_{l}\right)+p_{l}^{2}+p_{0}p_{l}\right)}{4\left(p_{0}+p_{l}\right)\left(p_{0}^{2}-p_{l}^{2}\right)^{3/2}}\right)I,
\end{aligned}
\end{equation}
where $\frac{\partial \alpha_{s}}{\partial p_{l}}$ and $\frac{\partial \beta_{s}}{\partial p_{l}}$ are given according to (\ref{eq:113}).
%\begin{equation}
%\label{eq:app_5}
%\begin{aligned}
%\frac{\partial \alpha_{s}}{\partial p_{l}}&=-\frac{d\left(\sin\zeta_{i}+\cos\zeta_{i}\right)}{\left(p_{0}+p_{l}\right)^{2}},\\\frac{\partial \beta_{s}}{\partial p_{l}}&=\frac{p_{l}\left(p_{0}-p_{l}\sin2\zeta_{i}\right)}{\left(p_{0}^{2}-p_{l}^{2}\right)^{2}}-\frac{\sin2\zeta_{i}}{2\left(p_{0}^{2}-p_{l}^{2}\right)}.
%\end{aligned}
%\end{equation}
%In Eq.~(\ref{eq:app_5}), $\zeta_{i}=\frac{\pi}{4}\left(\theta_{i}+1\right)$ where $\theta_{i}$ denotes the $i$-th Gauss node.
As can be seen in (\ref{eq:app_1}), (\ref{eq:app_2}) and (\ref{eq:app_4}), analyzing the convexity of $\Phi_{m}(p_{l})$ depends on the parameters $d$, $p_{0}$, $N_{q}$ and $\{\theta_{i}\}$ which indicates the fact that the analysis is restricted to the case where the mentioned parameters are known; i.e. the parameters must be specified for the covariance matrix recovery. Generally, based on (\ref{eq:app_1}), (\ref{eq:app_2}) and (\ref{eq:app_4}), $\Phi_{m}(p_{l})$ is convex when $J_{s}^{\prime}(p_{l})>0$ or equivalently $J_{s}(p_{l})$ is a strictly increasing function in the feasible region of $p_{l}$; i.e. $-p_{0} \leq p_{l} \leq p_{0}$. As a result, $\Phi_{m}^{\prime}(p_{l})=0$ has only one solution which is the value of $p_{l}$ that satisfies $R_{\mathbf{y}}(l)=J_{s}(p_{l})$. Therefore, the convexity of $\Phi_{m}(p_{l})$ can be easily concluded based on (\ref{eq:app_1}). For instance, one can easily verify that the selected parameters for the recovery of the input covariance matrix in Section~\ref{subsec:18} makes $J_{s}(p_{l})$ a strictly increasing function, and thus,  $\Phi_{m}(p_{l})$ a convex function.

\section{Proof of The Modified Bussgang Law Formula}
Note that
\begin{equation}
\label{eq:95}
\begin{aligned}
R_{\mathbf{yw}}(i,j) &= \frac{e^{-\frac{d^{2}}{p_0+p_{l}}}}{2\pi\sqrt{p_{0}^2-p_{l}^2}} \int^{\infty}_{-\infty}  g(w_{j})e^{\frac{2d(p_{0}-p_{l})w_{j}+w^{2}_{j}p_{0}}{-2(p^{2}_{0}-p^{2}_{l})}}\\ &\int^{\infty}_{-\infty}w_{i}e^{\frac{2d(p_{0}-p_{l})w_{i}+w^{2}_{i}p_{0}-2p_{ij}w_{i}w_{j}}{-2(p^{2}_{0}-p^{2}_{l})}} \,dw_{i}\,dw_{j}.
\end{aligned}
\end{equation}
Let us denote  the inner integral and the outer integral by $\mathbb{L}_{1}$ and $\mathbb{L}_{2}$, respectively. The inner integral is evaluated as,
\begin{equation}
\label{eq:96}
\begin{aligned}
\mathbb{L}_{1}&=\int^{\infty}_{-\infty}w_{i}e^{\frac{2d(p_{0}-p_{l})w_{i}+w^{2}_{i}p_{0}-2p_{l}w_{i}w_{j}}{-2(p^{2}_{0}-p^{2}_{l})}} \,dw_{i}\\ &=e^{\frac{\left(p_{0}d-p_{l}\left(w_{j}+d\right)\right)^2}{2p_{0}(p^{2}_{0}-p^{2}_{l})}}\int^{\infty}_{-\infty}
w_{i}e^{-\frac{\left(w_{i}+\left(d-\frac{p_{l}}{p_{0}}\left(w_{j}+d\right)\right)\right)^2}{2(p_{0}-\frac{p^{2}_{l}}{p_{0}})}}\,dw_{i}\\ &= e^{\frac{\left(p_{0}d-p_{l}\left(w_{j}+d\right)\right)^2}{2p_{0}(p^{2}_{0}-p^{2}_{l})}} \sqrt{2\pi \left((p_{0}-\frac{p^{2}_{l}}{p_{0}})\right)} \times \cdots\\
&\left(\frac{p_{l}}{p_{0}}\left(w_{j}+d\right)-d\right).
\end{aligned}
\end{equation}
Moreover, the outer integral may be evaluated as,
\begin{equation}
\label{eq:97}
\begin{aligned}
&\mathbb{L}_{2} = \sqrt{2\pi \left((p_{0}-\frac{p^{2}_{l}}{p_{0}})\right)} e^{\frac{p^{2}_{0}d^{2}+p^{2}_{l}d^{2}-2dp_{0}p_{l}}{2p_{0}(p^{2}_{0}-p^{2}_{l})}} \times \cdots\\
&\int^{\infty}_{-\infty}g(w_{j})\left(\frac{p_{l}}{p_{0}}\left(w_{j}+d\right)-d\right)e^{\frac{w^{2}_{j}(p^{2}_{0}-p^{2}_{l})+2dw_{j}(p^{2}_{0}-p^{2}_{l})}{-2p_{0}(p^{2}_{0}-p^{2}_{l})}} \,dw_{j}.
\end{aligned}
\end{equation}
The integration in (\ref{eq:97}) can thus be simplified as follows:
\begin{equation}
\label{eq:170}
\begin{aligned}
\mathbb{L}_{2} &= \sqrt{2\pi \left(p_{0}-\frac{p^{2}_{l}}{p_{0}}\right)} e^{\frac{d^{2}}{p_0+p_{l}}} \times \cdots\\
&\int^{\infty}_{-\infty}g(w_{j})\left(\frac{p_{l}}{p_{0}}\left(w_{j}+d\right)-d\right) e^{-\frac{\left(w_{j}+d\right)^2}{2p_{0}}} \,dw_{j}.
\end{aligned}
\end{equation}
Therefore, based on (\ref{eq:95}), (\ref{eq:96}) and (\ref{eq:170}), the  modified Bussgang law is obtained as
\begin{equation}
\label{eq:99}
\begin{aligned}
R_{\mathbf{yw}}(i,j) =  C_{1}p_{l}-C_{2} d(p_{0}-p_{l}),
\end{aligned}
\end{equation}
where $C_{1}$ and $C_{2}$ are given by
\begin{equation}
\label{eq:150}
\begin{aligned}
C_{1} &= \frac{1}{\sqrt{2\pi p^{3}_{0}}}\int^{\infty}_{-\infty} w_{j}g(w_{j})e^{-\frac{(w_{j}+d)^{2}}{2p_{0}}}\,dw_{j},\\ C_{2} &= \frac{1}{\sqrt{2\pi p^{3}_{0}}}\int^{\infty}_{-\infty}g(w_{j})e^{-\frac{(w_{j}+d)^{2}}{2p_{0}}}\,dw_{j}.
\end{aligned}
\end{equation}
\bibliographystyle{IEEEbib}
\bibliography{strings,refs}

\end{document}